\documentclass[11pt,a4paper,oneside]{article}

\usepackage{setspace}

\usepackage[top=3cm, bottom=3cm, left=2cm, right=2cm]{geometry}
\usepackage{authblk}
\usepackage{dirtytalk}

\usepackage{amsmath}
\usepackage{graphicx,psfrag,epsf}
\usepackage{enumerate}
\usepackage{natbib}
\usepackage{xcolor}
\usepackage{url} 
\usepackage{amsfonts}
\usepackage{amsthm}
\usepackage{amssymb}
\usepackage{booktabs}

\usepackage{mathrsfs} 
\usepackage{xcolor}
\usepackage[caption = false, labelformat=simple]{subfig}
\usepackage{bm}
\usepackage{bbm}
\usepackage{mathtools} 
\usepackage{comment}
\usepackage{dirtytalk}
\usepackage[backref=false]{hyperref}

\newtheorem{corollary}{Corollary}
\newtheorem{proposition}{Proposition}
\newtheorem*{remark}{Remark}

\newtheorem{definition}{Definition}

\def\simiid{\stackrel{\text {iid}}{\sim}}
\def\Pr{\mathrm{Pr}}
\def\CER{\mathrm{CER}}
\def\GCER{\mathrm{GCER}}

\DeclareMathOperator*{\argmax}{argmax}
\DeclareMathOperator*{\argmin}{argmin}

\newcommand{\Autoref}[1]{%
  \begingroup%
  \def\chapterautorefname{Chapter}%
  \def\sectionautorefname{Section}%
  \def\subsectionautorefname{Section}%
  \def\subsubsectionautorefname{Section}%
  \def\appendixautorefname{Section}%
  \def\definitionautorefname{Definition}%
    \def\theoremautorefname{Theorem}%
  \def\propositionautorefname{Proposition}%
  \def\corollaryautorefname{Corollary}%
    \def\algorithmautorefname{Algorithm}%
  \autoref{#1}%
  \endgroup%
}

\usepackage[sectionbib]{bibunits}
\defaultbibliographystyle{chicago}
\defaultbibliography{references.bib}

\title{\huge Efficient Bayesian inference for multiple network data}
\date{}
\author[1]{Francesco Barile}
\author[1]{Sara Capozio} 
\author[1]{Bernardo Nipoti}

\affil[1]{Department of Economics, Management and Statistics, University of Milano-Bicocca, Italy\\\href{mailto:francesco.barile@unimib.it}{\nolinkurl{francesco.barile@unimib.it}} 
}

\begin{document}

\maketitle

\begin{bibunit}

\begin{abstract}
We investigate distributional properties of the centered Erd\H{o}s--R\'enyi distribution \citep{Lun21} and propose a semi-conjugate Bayesian approach to multiple network data.
 In simulations, both Gibbs sampling and empirical Bayes accurately recover network summaries, with the latter scaling efficiently with network size. As a companion to this note, we provide the R package \texttt{BayesCER}, which implements the proposed methodology.
\end{abstract}

\noindent%
{\it \textbf{Keywords}:}  
Centered Erdős–Rényi distribution,  global clustering coefficient, multiple network data, semi-conjugacy 


\section{Introduction}\label{sec1}
Multiple network data consist of repeated observations of networks on the same node set. They arise, for example, in neuroscience, where connections among brain regions are recorded across individuals or populations \citep{nelson2017comparison}. Their analysis becomes increasingly challenging as the number of nodes grows, since both the number of potential edges and the space of possible graph configurations grow rapidly. Much of the literature addresses this difficulty via dimensionality reduction, including latent-space models
\citep[e.g.,][]{gollini,durante2017,dangelo19,dangelo_fop}. The centered Erd\H{o}s--R\'enyi (CER) distribution of \citet{Lun21} offers a different strategy: a tractable location--dispersion model that retains the full graph representation without imposing specific structural assumptions on its central graph. The CER has also been used as a kernel in mixture models \citep{Man24,Bar24}.

This note makes two main contributions. First, for CER-distributed graphs, we derive exact and asymptotic edge-count distributions, expressions for expected triangle and length-two-path counts, and an approximation to the expected global clustering coefficient; node-specific counterparts are given in the appendices. Second, we introduce a generalization of CER as the prior for the location parameter in a semi-conjugate Bayesian model, yielding Gibbs and empirical Bayes (EB) procedures. Numerical experiments show that both recover global network summaries accurately, with EB scaling efficiently with network size.

\section{Modeling tools and properties}  
Let $\mathcal V=\{1,\ldots,N\}$ and let $\mathcal G_{\mathcal V}$ be the
space of simple undirected labeled graphs on $\mathcal V$. We also let $M=\binom{N}{2}$ be the maximum number of edges that a graph in $\mathcal{G}_{\mathcal{V}}$ may display. A graph $G$
is represented by the upper-triangular part of its adjacency matrix $A_G$. For notational convenience, we introduce $\bar{A}_G$, the symmetrized version of $A_G$, defined such that for any $i \leq j$, $\bar{A}_G[ij] = \bar{A}_G[ji] = A_G[ij]$. 
Finally, we use the notation $\tilde{G}$ for a random graph in this section.

\subsection{The centered Erdős--Rényi distribution}\label{sec:cer_sec}

\begin{definition}[\citealp{Lun21}]\label{def:CER}
A random graph $\tilde G$ taking values in $\mathcal G_{\mathcal V}$
has a centered Erdős--Rényi distribution with location
$C\in\mathcal G_{\mathcal V}$ and dispersion $\alpha\in(0,1/2)$,
written $\tilde G\sim\CER(C,\alpha)$, if the entries
$A_{\tilde G}[ij]$, $i<j$, are independent and
\begin{equation}\label{eq:CER1}
\Pr(A_{\tilde G}[ij]=A_C[ij])=1-\alpha,
\qquad i<j.
\end{equation}
\end{definition}
\noindent 
Its probability mass function (pmf) on $\mathcal{G}_{\mathcal{V}}$ can be written as $p_{\CER}(G;C,\alpha)=\alpha^{d_{\text{H}}(G,C)}(1-\alpha)^{M-d_{\text{H}}(G,C)}$, 
where $d_\text{H}$ denotes the Hamming distance, 
such that $d_{\text{H}}(G,C)$ counts the number of entries that differ between $A_G$ and $A_{C}$. 
\citet{Lun21} show that the CER distribution is unimodal with unique mode at $C$.
Moreover, 
the Fréchet mean set $\mathscr{F}(\tilde G)=\argmin_{G\in\mathcal{G}_{\mathcal{V}}}
\mathbb{E}[d_{\mathrm{H}}^{2}(\tilde G,G)]$, 
with 
expectation 
taken with respect to $\tilde G \sim \CER(C,\alpha)$, 
is a singleton whose unique element is $C$.
Finally, $\alpha$ regulates the dispersion around $C$ and directly controls the entropy of the distribution: 
the distribution concentrates at $C$ as $\alpha\to0$ and approaches the uniform distribution on $\mathcal G_{\mathcal V}$ as $\alpha\to1/2$.

We next derive additional properties of the CER distribution.
All proofs and node-specific results are provided in \ref{sec:proof_S1}. 
For a graph $G$, let $|G|=\sum_{i<j}A_G[ij]$ be its number of edges and let $|G|_{i}=\sum_{j\neq i}\bar{A}_G[ij]$ be the degree of node $i$.  Let $T(G)$ denote the number of triangles in $G$, namely the number of
triples $i<j<k$ such that
$\bar{A}_G[ij]=\bar{A}_G[ik]=\bar{A}_G[jk]=1$, and let
$P^{(2)}(G)=\sum_{i=1}^N\binom{|G|_i}{2}$ denote the number of
non-oriented paths of length two. For each node $i$,
$\binom{|G|_i}{2}$ counts the unordered pairs of its neighbors,
irrespective of whether they are connected. The following
results give the global edge-count distribution, and its large-$N$
approximation, and
the global expected number of triangles and non-oriented paths of length two.  In the following, the notation $\stackrel{\footnotesize{d}}{=}$ and $\stackrel{\footnotesize{d}}{\longrightarrow}$  indicates identity and convergence in distribution, respectively.
\begin{proposition}\label{prop:edges}
If $\tilde G \sim \CER(C,\alpha)$, then
$|\tilde G| \stackrel{d}{=} W_1 + W_2$,
    where $W_1 \sim \mathrm{Binomial}\left(|C|,1-\alpha\right)$ and $W_2 \sim \mathrm{Binomial}\left(M-|C|, \alpha\right)$ are independent.
In addition, as
$N\to\infty$,
\begin{equation}\label{eq:CLT_G}
\frac{|\tilde G| - \big(M\alpha + |C| (1-2\alpha)\big)}{\sqrt{M \alpha(1-\alpha)}} 
        \stackrel{d}{\longrightarrow} Z \sim \mathrm{N}(0,1).
\end{equation}
\end{proposition}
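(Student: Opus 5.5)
The exact part follows by splitting the edge indicators according to the location graph. Let $E_1=\{i<j: A_C[ij]=1\}$ and $E_0=\{i<j: A_C[ij]=0\}$, so $|E_1|=|C|$ and $|E_0|=M-|C|$. By Definition~\ref{def:CER}, the entries $A_{\tilde G}[ij]$ are independent Bernoulli variables. Their success probability is $1-\alpha$ on $E_1$ and $\alpha$ on $E_0$, since $\Pr(A_{\tilde G}[ij]=1)=\Pr(A_{\tilde G}[ij]=A_C[ij])$ when $A_C[ij]=1$, and equals $\alpha$ when $A_C[ij]=0$. Writing
\[
|\tilde G|=\sum_{E_1}A_{\tilde G}[ij]+\sum_{E_0}A_{\tilde G}[ij]=:W_1+W_2,
\]
each $W_k$ is a sum of i.i.d.\ Bernoulli variables, hence binomial with the stated parameters. $W_1$ and $W_2$ are independent because they are functions of disjoint sets of independent entries.

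For the CLT, I first compute the moments. The mean is $\mathbb E|\tilde G|=|C|(1-\alpha)+(M-|C|)\alpha=M\alpha+|C|(1-2\alpha)$. The variance is $|C|(1-\alpha)\alpha+(M-|C|)\alpha(1-\alpha)=M\alpha(1-\alpha)$, and it does not depend on $C$. This is why the normalization in \eqref{eq:CLT_G} involves $M$ alone.

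Here $C=C_N$ may vary arbitrarily with $N$, so the summands form a triangular array and the classical i.i.d.\ CLT does not apply directly. I would apply the Lindeberg--Feller (or Lyapunov) CLT to the centered summands $X_{ij}=A_{\tilde G}[ij]-\mathbb E A_{\tilde G}[ij]$. These are independent and bounded by $1$ in absolute value. Their total variance is $s_N^2=M\alpha(1-\alpha)$, which tends to $\infty$ as $N\to\infty$ because $M=\binom N2\to\infty$ and $\alpha\in(0,1/2)$ is fixed. The Lyapunov condition with exponent $3$ then holds:
\[
\frac{\sum_{i<j}\mathbb E|X_{ij}|^3}{s_N^3}\le \frac{s_N^2}{s_N^3}=s_N^{-1}\to 0,
\]
using $\mathbb E|X_{ij}|^3\le \mathbb E X_{ij}^2$ since $|X_{ij}|\le1$.

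The Lyapunov CLT then gives \eqref{eq:CLT_G}. The only delicate point is that the result holds uniformly in the choice of $C_N$, which is covered by the triangular-array argument above. If $\alpha$ were allowed to depend on $N$, the same argument would still work provided $M\alpha(1-\alpha)\to\infty$.
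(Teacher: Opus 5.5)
Your proof is correct and follows essentially the same route as the paper. The paper also splits the independent Bernoulli edge indicators according to $A_C$ for the exact law, and verifies Lyapunov's condition through the bound $\mathbb{E}|X_{ij}|^{2+\delta}\le\alpha(1-\alpha)$, which makes the Lyapunov ratio at most $(M\alpha(1-\alpha))^{-\delta/2}\to0$; the only difference is that you fix $\delta=1$ and use $|X_{ij}|\le 1$, whereas the paper computes the moment explicitly for general $\delta>0$.
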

\noindent The expected edge density, that is the expected proportion of possible edges present in $\tilde G$, is a simple summary of overall connectivity.  
By \Autoref{prop:edges},
\begin{equation}\label{eq:edge-density}
\frac{\mathbb E[|\tilde G|]}{M}
=\alpha+(1-2\alpha)\frac{|C|}{M}.
\end{equation}

\begin{proposition}\label{prop:triangles}
If $\tilde G\sim\CER(C,\alpha)$, then the following hold.

\noindent\textup{(a)} The expected number of triangles in $\tilde G$ is
\begin{equation*}
\mathbb{E}[T(\tilde G)]
=
\alpha^3\binom{N}{3}
+\alpha^2(1-2\alpha)(N-2)|C|
+\frac{1}{2}\alpha(1-2\alpha)^2
\Bigg(\sum_{i=1}^N|C|_i^2-2|C|\Bigg)
+(1-2\alpha)^3T(C).
\end{equation*}

\noindent\textup{(b)} The expected number of non-oriented paths of length
two in $\tilde G$ is
\begin{equation*}
\mathbb{E}[P^{(2)}(\tilde G)]
=
3\alpha^2\binom{N}{3}
+(1-2\alpha)\{2(N-1)\alpha-1\}|C|
+\frac{1}{2}(1-2\alpha)^2\sum_{i=1}^N|C|_i^2.
\end{equation*}
\end{proposition}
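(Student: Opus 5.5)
Both parts follow from the edge-level representation of the CER. Write each entry as $A_{\tilde G}[ij]=\bar A_{\tilde G}[ij]$ with $\Pr(\bar A_{\tilde G}[ij]=1)=p_{ij}$, where $p_{ij}=\alpha+(1-2\alpha)\bar A_C[ij]$. This is the value $1-\alpha$ if $ij\in C$ and $\alpha$ otherwise, and it follows directly from \eqref{eq:CER1}. The entries are independent across distinct pairs, so for any set of distinct pairs the expectation of the product of indicators equals the product of the $p_{ij}$. Every quantity of interest is a sum of such products, so both expectations reduce to combinatorial sums over $C$ once $p_{ij}=\alpha+(1-2\alpha)c_{ij}$ is expanded, with $c_{ij}=\bar A_C[ij]\in\{0,1\}$.

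For (a), I will write $T(\tilde G)=\sum_{i<j<k}\bar A[ij]\bar A[ik]\bar A[jk]$. The three pairs are distinct, so
\[
\mathbb E[T(\tilde G)]=\sum_{i<j<k}\prod_{e\in\{ij,ik,jk\}}\big(\alpha+(1-2\alpha)c_e\big).
\]
Expanding by the number of $C$-edges chosen from the triangle gives four groups:
\begin{itemize}
\item the constant term, $\alpha^3\binom N3$;
\item the linear term, $\alpha^2(1-2\alpha)\sum_{i<j<k}(c_{ij}+c_{ik}+c_{jk})$, where each edge of $C$ lies in $N-2$ triples, giving $(N-2)|C|$;
\item the quadratic term, $\alpha(1-2\alpha)^2$ times the number of pairs of $C$-edges sharing a vertex, i.e.\ the number of length-two paths in $C$, which equals $\sum_i\binom{|C|_i}{2}=\frac12(\sum_i|C|_i^2-\sum_i|C|_i)=\frac12(\sum_i|C|_i^2-2|C|)$ by the handshake lemma;
\item the cubic term, $(1-2\alpha)^3T(C)$.
\end{itemize}
These are exactly the four terms of the stated formula.

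For (b), I will write $\binom{|\tilde G|_i}{2}=\sum_{\{j,k\}\subset\mathcal V\setminus\{i\},\,j<k}\bar A[ij]\bar A[ik]$. The pairs $ij$ and $ik$ are distinct, so
\[
\mathbb E[P^{(2)}(\tilde G)]=\sum_i\sum_{j<k;\,j,k\neq i}p_{ij}p_{ik}.
\]
For fixed $i$, the identity $\sum_{j<k}p_{ij}p_{ik}=\frac12\big[(\sum_j p_{ij})^2-\sum_j p_{ij}^2\big]$ applies, with $\sum_{j\ne i}p_{ij}=(N-1)\alpha+(1-2\alpha)|C|_i$. Because $c^2=c$, $p_{ij}^2=\alpha^2+(1-2\alpha)c_{ij}$, since $(1-\alpha)^2-\alpha^2=1-2\alpha$, and hence $\sum_j p_{ij}^2=(N-1)\alpha^2+(1-2\alpha)|C|_i$. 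Substituting and summing over $i$ with $\sum_i|C|_i=2|C|$ gives three contributions:
\begin{itemize}
\item the constant part, $\frac12N[(N-1)^2\alpha^2-(N-1)\alpha^2]=\frac12N(N-1)(N-2)\alpha^2=3\alpha^2\binom N3$;
\item the cross term, $\frac12\cdot 2(N-1)\alpha(1-2\alpha)\cdot 2|C|=2(N-1)\alpha(1-2\alpha)|C|$;
\item the square term, $\frac12(1-2\alpha)^2\sum_i|C|_i^2$, together with the subtracted piece $-\frac12(1-2\alpha)\cdot 2|C|=-(1-2\alpha)|C|$.
\end{itemize}
Combining the linear pieces gives $(1-2\alpha)\{2(N-1)\alpha-1\}|C|$, which matches the statement.

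The only delicate step is the bookkeeping in the quadratic term of (a). There I must recognise that pairs of $C$-edges inside a triple correspond exactly to length-two paths of $C$, each counted once, because two edges sharing a vertex determine their triple uniquely. That count must then be converted to degree form. Everything else is mechanical expansion using independence and the idempotence $c_{ij}^2=c_{ij}$.
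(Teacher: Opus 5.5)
Your proof is correct, and the computations check out. The core mechanism is the same as the paper's: independence of the edges plus the linearization $p_{ij}=\alpha+(1-2\alpha)\bar A_C[ij]$, which is the paper's $\alpha\bigl(1+\frac{1-2\alpha}{\alpha}\bar A_C[ij]\bigr)$. Where you differ is in how part (a) is organized.

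The paper does not expand over triples directly. It first derives the node-level expectation $\mathbb E[T_i(\tilde G)]$ in Proposition~\ref{prop:p2}, where the quadratic term needs the bookkeeping $\sum_{j\neq i}\bar A_C[ij](|C|_j-1)$. It then sums over $i$ using $T=\frac13\sum_i T_i$ and $\sum_i\sum_{j\neq i}\bar A_C[ij]|C|_j=\sum_j|C|_j^2$. Your global expansion, with the observation that pairs of $C$-edges inside a triple correspond one-to-one to length-two paths of $C$, reaches the degree form in one step and is shorter. The paper's detour has the advantage of producing the node-specific formulas, which it reports separately, as a by-product.

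For (b) the two arguments are the same computation in different dress. The paper evaluates $\mathbb E[|\tilde G|_i(|\tilde G|_i-1)]=\mathrm{Var}(|\tilde G|_i)+\mu_i(\mu_i-1)$, with $\mu_i=\mathbb E[|\tilde G|_i]$ and $\mathrm{Var}(|\tilde G|_i)=(N-1)\alpha(1-\alpha)$ taken from the binomial representation. Your identity $\sum_{j<k}p_{ij}p_{ik}=\frac12\bigl[(\sum_j p_{ij})^2-\sum_j p_{ij}^2\bigr]$, with $p_{ij}^2=\alpha^2+(1-2\alpha)\bar A_C[ij]$, is the same second factorial moment, written through $\sum_j p_{ij}^2$ rather than $\sum_j p_{ij}(1-p_{ij})$.
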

\noindent Although not immediately interpretable, the expressions for
$\mathbb{E}[T(\tilde G)]$ and $\mathbb{E}[P^{(2)}(\tilde G)]$ in \Autoref{prop:triangles} are readily evaluated numerically, as
they depend on the location parameter $C$ only through its edge count
$|C|$, sum of squared degrees $\sum_{i=1}^N |C|_i^2$, and triangle count
$T(C)=\operatorname{tr}(\bar A_C^3)/6$. The random variables $T(\tilde G)$ and $P^{(2)}(\tilde G)$ are closely related to the tendency of nodes to form tightly connected clusters. One of the most commonly used measures of this tendency is the global
clustering coefficient, also known as transitivity \citep{Newman03},
defined for a network $G$ by
$\mathscr C(G)=3T(G)/P^{(2)}(G)$ when $P^{(2)}(G)>0$, and
$\mathscr C(G)=0$ otherwise. Following \citet{nakajima2021measurement}, we approximate the expected
global clustering coefficient as
\begin{equation}\label{eq:MF}
\mathbb{E}[\mathscr{C}(\tilde G)]
\approx
\frac{3\mathbb{E}[T(\tilde G)]}
{\mathbb{E}[P^{(2)}(\tilde G)]}.
\end{equation}
We call the right-hand side of \eqref{eq:MF} the approximate expected
global clustering coefficient and denote it by $\mathscr{C}^*(\tilde G)$. \Autoref{prop:gcc} gives it explicitly for
$\tilde G\sim\CER(C,\alpha)$.
\begin{corollary}
\label{prop:gcc} If $\tilde G\sim\CER(C,\alpha)$, then the approximate expected global clustering coefficient $\mathscr{C}^*(\tilde G)$ is given by
\begin{equation*}
\mathscr{C}^*(\tilde G)=\frac{
        3\alpha^3 \binom{N}{3}
        + 3\alpha^2(1-2\alpha)(N-2)|C|
        + \frac{3}{2}
        \alpha(1-2\alpha)^2 \left(\sum_{i=1}^N |C|_i^2 - 
        2|C| \right)
        + 3 (1-2\alpha)^3 T(C)
}{
        3\alpha^2 \binom{N}{3}
        + (1-2\alpha)(2(N-1)\alpha-1)|C|
        + \frac{1}{2} (1-2\alpha)^2 \sum_{i=1}^N |C|_i^2}. \label{eq:ecc}
\end{equation*}
\end{corollary}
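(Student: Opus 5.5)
The corollary follows by direct substitution: $\mathscr{C}^*(\tilde G)$ is defined as the right-hand side of \eqref{eq:MF}, namely $3\mathbb{E}[T(\tilde G)]/\mathbb{E}[P^{(2)}(\tilde G)]$, and both expectations are given in closed form by \Autoref{prop:triangles}. We multiply the expression in part (a) by $3$ to obtain the numerator, and take part (b) unchanged as the denominator. Distributing the factor $3$ over the four summands of (a) gives the stated numerator term by term:
\[
3\alpha^3\binom{N}{3}+3\alpha^2(1-2\alpha)(N-2)|C|+\tfrac{3}{2}\alpha(1-2\alpha)^2\Bigl(\textstyle\sum_i|C|_i^2-2|C|\Bigr)+3(1-2\alpha)^3T(C).
\]

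The only point needing care is that the ratio is well defined, i.e.\ $\mathbb{E}[P^{(2)}(\tilde G)]>0$. We argue this directly rather than from the formula. For $\alpha\in(0,1/2)$, every edge indicator $A_{\tilde G}[ij]$ equals $1$ with probability at least $\alpha>0$, independently across pairs. Hence, for $N\ge3$, any fixed path of length two is present with probability at least $\alpha^2>0$. Since $P^{(2)}(\tilde G)\ge0$ almost surely, its expectation is therefore strictly positive.

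Consequently the definition $\mathscr{C}^*(\tilde G)=3\mathbb{E}[T]/\mathbb{E}[P^{(2)}]$ applies without invoking the convention $\mathscr C=0$. The statement is implicitly for $N\ge3$; for $N<3$ the quantity is degenerate.

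There is no real obstacle here. All the combinatorial work, namely the expansion of $\Pr(A_{\tilde G}[ij]=1)=\alpha+(1-2\alpha)A_C[ij]$ over triples and over pairs of edges sharing a node, is contained in \Autoref{prop:triangles}. What remains is bookkeeping of the constant factor $3$ and a check that no algebraic simplification was intended beyond that factor.
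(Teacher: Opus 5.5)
Your proposal is correct and takes the same approach as the paper, which treats the corollary as immediate: substitute both parts of Proposition~\ref{prop:triangles} into the right-hand side of \eqref{eq:MF} and distribute the factor $3$ over the triangle expression. Your extra check that $\mathbb{E}[P^{(2)}(\tilde G)]\geq 3\alpha^2\binom{N}{3}>0$ for $N\geq3$ is a valid addition that the paper leaves implicit; note that the convention $\mathscr C(G)=0$ applies only to the random coefficient, not to the ratio $\mathscr C^*$, so it never needed to be set aside.
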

\subsection{A generalized CER distribution}
Instrumental to the construction of a conjugate prior for $C$, we
propose an extension of the support of the CER location parameter. 
Let
$\mathcal W_{\mathcal V}\supset\mathcal G_{\mathcal V}$ be the space of simple weighted
 undirected graphs
 whose upper-triangular adjacency entries 
 $A_{W}[ij] \in [0,1], i<j$, where $W \in \mathcal W_{\mathcal V}$. 
For $W_1,W_2\in\mathcal W_{\mathcal V}$, let 
$d_\mathrm{T}(W_1,W_2)=\sum_{i<j}|A_{W_1}[ij]-A_{W_2}[ij]|$ denote the taxicab, or $L^1$, distance between $W_1$ and $W_2$, and observe that, when $W_1,W_2\in\mathcal{G}_{\mathcal{V}}$, then $d_\mathrm{T}(W_1,W_2)=d_\mathrm{H}(W_1,W_2)$. 
As above, we let $\bar{A}_W$ denote the symmetrization of $A_W$.
\begin{definition}
   A random graph $\tilde G$ taking values in $\mathcal{G_{\mathcal{V}}}$ has generalized centered Erdős--Rényi (GCER) distribution with location $W\in\mathcal{W}_{\mathcal{V}}$ and 
   dispersion $\alpha\in(0,1/2)$, if the $A_{\tilde{G}}[ij]$'s are independent and
\begin{equation}\label{eq:CER2}
\Pr(A_{\tilde G}[ij]=1
)=\frac{1}{1+\pi_{ij}(W,\alpha)}, \quad \text{ where  }\quad \pi_{ij}(W,\alpha)=\left(\frac{\alpha}{1-\alpha}\right)^{2 A
_W[ij]-1}.
\end{equation} 
We write $\tilde G\sim\GCER(W,\alpha)$ and
$p_{\GCER}(G;W,\alpha)$ for the corresponding pmf.
\end{definition}
    \noindent The CER distribution is recovered when $W \in \mathcal{G}_{\mathcal{V}}$.
    In a CER-distributed random graph $\tilde G$, the probability of an edge in \eqref{eq:CER1}
    is restricted to be either $\alpha$ or $1-\alpha$.
  The GCER distribution relaxes this constraint by allowing the corresponding probability in \eqref{eq:CER2} to vary over $[\alpha,1-\alpha]$. 
    Propositions~\ref{prop:GCERmonotone} and~\ref{prop:GCER_mode} show that $W$ acts as a location parameter. 
    \Autoref{prop:entropy} shows that, for any fixed $W\in\mathcal{W}_\mathcal{V}$ such that \(A_W[ij]\neq 1/2\) for at least one pair \(i<j\), the entropy of a random graph \(\tilde G\sim\GCER(W,\alpha)\), that is $-\mathbb{E} \left[ \log p_{\GCER}(\tilde G;W,\alpha) \right]$, 
is a strictly increasing function of \(\alpha\), supporting its interpretation as a dispersion parameter. All proofs are provided in \ref{sec:proof_S2}.

\begin{proposition}\label{prop:GCERmonotone} 
Let \(W\in\mathcal{W}_{\mathcal{V}}\) and \(\alpha\in(0,1/2)\). If $G_1,G_2\in\mathcal{G}_{\mathcal{V}}$ are such that
$d_{\mathrm{T}}(G_1, W) \geq d_{\mathrm{T}}(G_2, W)$, then $p_{\GCER}(G_2;W,\alpha) \geq p_{\GCER}(G_1;W,\alpha)$, with 
equality holding if and only if $d_{\mathrm{T}}(G_1, W) = d_{\mathrm{T}}(G_2, W)$. 
\end{proposition}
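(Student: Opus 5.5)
The plan is to show that $p_{\GCER}(G;W,\alpha)$ depends on $G$ only through $d_{\mathrm T}(G,W)$, and that it does so through a strictly decreasing function. Once this is established, both the inequality and the equality characterization in \Autoref{prop:GCERmonotone} follow at once.

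Set $r=\alpha/(1-\alpha)$. Since $\alpha\in(0,1/2)$, we have $r\in(0,1)$. Write $w_{ij}=A_W[ij]\in[0,1]$. The key step is to rewrite the edge probabilities in \eqref{eq:CER2} in the symmetric form
\begin{equation*}
\Pr(A_{\tilde G}[ij]=x)=\frac{r^{|x-w_{ij}|}}{r^{w_{ij}}+r^{1-w_{ij}}},\qquad x\in\{0,1\}.
\end{equation*}
To verify this for $x=1$, divide numerator and denominator of $r^{1-w_{ij}}/(r^{w_{ij}}+r^{1-w_{ij}})$ by $r^{1-w_{ij}}$. This gives $1/(1+r^{2w_{ij}-1})=1/(1+\pi_{ij}(W,\alpha))$, which matches \eqref{eq:CER2}. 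For $x=0$ the same manipulation gives $1/(1+r^{1-2w_{ij}})$. This equals $1-1/(1+r^{2w_{ij}-1})$, as required. The verification uses only the fact that $|x-w_{ij}|$ equals $1-w_{ij}$ when $x=1$ and $w_{ij}$ when $x=0$, because $w_{ij}\in[0,1]$.

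Next, use independence of the entries and multiply over $i<j$. This yields
\begin{equation*}
p_{\GCER}(G;W,\alpha)=\frac{r^{\sum_{i<j}|A_G[ij]-w_{ij}|}}{Z(W,\alpha)}=\frac{r^{d_{\mathrm T}(G,W)}}{Z(W,\alpha)},\qquad Z(W,\alpha)=\prod_{i<j}\bigl(r^{w_{ij}}+r^{1-w_{ij}}\bigr).
\end{equation*}
The normalizing constant $Z(W,\alpha)$ is positive and does not depend on $G$.

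Finally, because $0<r<1$, the map $t\mapsto r^t/Z$ is strictly decreasing on $[0,\infty)$. Therefore $d_{\mathrm T}(G_1,W)\ge d_{\mathrm T}(G_2,W)$ implies $p_{\GCER}(G_2;W,\alpha)\ge p_{\GCER}(G_1;W,\alpha)$. By strict monotonicity, equality of the two probabilities holds exactly when the two distances coincide.

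The only real obstacle is finding the symmetric representation of the Bernoulli probabilities; after that the argument is routine. As a consistency check, when $W\in\mathcal G_{\mathcal V}$ we get $Z=(1+r)^M=(1-\alpha)^{-M}$, and the formula reduces to the CER pmf $\alpha^{d_{\mathrm H}}(1-\alpha)^{M-d_{\mathrm H}}$.
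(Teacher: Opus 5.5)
Your proof is correct and follows essentially the paper's route. Both arguments rest on the identity $|g-w| = w + g(1-2w)$ for $g\in\{0,1\}$, $w\in[0,1]$, which shows that the pmf is proportional to $\bigl(\alpha/(1-\alpha)\bigr)^{d_{\mathrm T}(G,W)}$ with a $G$-free constant; the paper phrases this through the ratio $p_{\GCER}(G_2;W,\alpha)/p_{\GCER}(G_1;W,\alpha)=\bigl((1-\alpha)/\alpha\bigr)^{d_{\mathrm T}(G_1,W)-d_{\mathrm T}(G_2,W)}$, whereas you write the normalized pmf $r^{d_{\mathrm T}(G,W)}/Z(W,\alpha)$ directly, which is slightly cleaner and yields the CER pmf as a built-in check.
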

\begin{proposition}\label{prop:GCER_mode}
Let $\tilde G \sim \GCER(W, \alpha)$. 
The Fr\'echet mean set $\mathscr{F}(\tilde G)$ of $\tilde G$, under the Hamming distance, is given by
\begin{equation*}
    \mathscr{F}(\tilde G)=\left\{L\in\mathcal{G}_{\mathcal{V}}\,:\,A_L[ij]=\mathbbm{1}_{\{A_W[ij]>1/2\}}
    \text{ for any }i<j \text{ s.t. } A_W[ij]\neq 1/2\right\}.
\end{equation*}
Moreover, the set of modes of the distribution of $\tilde G$ coincides with $\mathscr{F}(\tilde G)$.
\end{proposition}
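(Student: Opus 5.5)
The plan is to reduce everything to independent edgewise problems. Write $p_{ij}=\Pr(A_{\tilde G}[ij]=1)=1/(1+\pi_{ij}(W,\alpha))$ as in \eqref{eq:CER2}.

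\textbf{Step 1: the sign of $p_{ij}-1/2$.} Since $\alpha/(1-\alpha)\in(0,1)$, we have $\pi_{ij}<1$, $=1$, $>1$ according as $A_W[ij]>1/2$, $=1/2$, $<1/2$. Hence $p_{ij}>1/2$, $=1/2$, $<1/2$ in the same three cases.

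\textbf{Step 2: Fréchet mean set.} For fixed $L\in\mathcal G_{\mathcal V}$ write $d_{\mathrm H}(\tilde G,L)=\sum_{i<j}X_{ij}$ with $X_{ij}=|A_{\tilde G}[ij]-A_L[ij]|$. These are independent Bernoulli variables with mean $q_{ij}(L)$, where $q_{ij}(L)=p_{ij}$ if $A_L[ij]=0$ and $q_{ij}(L)=1-p_{ij}$ if $A_L[ij]=1$. Let $\mu(L)=\sum_{i<j}q_{ij}(L)$. Then
\[
\mathbb E[d_{\mathrm H}^2(\tilde G,L)]=\operatorname{Var}(d_{\mathrm H})+\mu(L)^2=\sum_{i<j}q_{ij}(L)\{1-q_{ij}(L)\}+\mu(L)^2 .
\]
The variance term does not depend on $L$, because $q(1-q)=p_{ij}(1-p_{ij})$ in both cases. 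So minimizing the objective amounts to minimizing $\mu(L)$, since $\mu\ge 0$ and $x\mapsto x^2$ is increasing there. The quantity $\mu(L)$ is separable across edges. Each term is minimized by choosing $A_L[ij]=1$ when $p_{ij}>1/2$ and $A_L[ij]=0$ when $p_{ij}<1/2$, and either choice works when $p_{ij}=1/2$. The minimum is strict edgewise in the first two cases. By Step 1, this gives exactly the stated set $\mathscr F(\tilde G)$.

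\textbf{Step 3: modes.} By independence, $p_{\GCER}(G;W,\alpha)=\prod_{i<j}p_{ij}^{A_G[ij]}(1-p_{ij})^{1-A_G[ij]}$, which is a product of positive edgewise factors. This is maximized exactly when each factor is maximal. That means choosing $A_G[ij]=1$ if $p_{ij}>1/2$ and $0$ if $p_{ij}<1/2$, with a free choice when $p_{ij}=1/2$. This is the same set as in Step 2.

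\textbf{Main obstacle.} The only delicate point is justifying that minimizing the second moment reduces to minimizing the first moment. This needs the $L$-invariance of the variance together with the nonnegativity of $\mu$. Both are verified above, so I expect no real difficulty.
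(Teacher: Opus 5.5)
Your proof is correct, and for the Fr\'echet-mean part it takes a different and cleaner route than the paper.

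\textbf{The paper's route.} The paper expands $d_{\mathrm H}^2$ into diagonal and cross terms, takes expectations, and argues coordinatewise. Holding all other entries of $A_L$ fixed, the part of $\Psi(L)$ involving the pair $(i,j)$ is $q_{ij}\bigl(1+2\sum_{(k,\ell)\neq(i,j)}q_{k\ell}\bigr)$. The positive factor does not depend on $A_L[ij]$, so the better choice of $A_L[ij]$ is dictated by $p_{ij}$ versus $1/2$, whatever the other entries are. To reach a statement about global minimizers, one must still note two things, which the paper leaves implicit. First, any $L$ outside the set can be strictly improved by a single flip. Second, flips at pairs with $p_{ij}=1/2$ leave $\Psi$ unchanged.

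\textbf{Your route.} You use $\mathbb E[X^2]=\operatorname{Var}(X)+(\mathbb E X)^2$ and observe that $\operatorname{Var}\{d_{\mathrm H}(\tilde G,L)\}=\sum_{i<j}p_{ij}(1-p_{ij})$ does not depend on $L$. This reduces the problem globally, in one step, to minimizing the separable quantity $\mu(L)=\mathbb E[d_{\mathrm H}(\tilde G,L)]$. The paper's expansion in fact simplifies to exactly your identity: $\sum_{i<j}q_{ij}+2\sum_{(i,j)\prec(k,\ell)}q_{ij}q_{k\ell}=\sum_{i<j}q_{ij}(1-q_{ij})+\mu(L)^2$. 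You simply make this structure explicit.

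\textbf{What your route buys.} It gives a fully rigorous global argument with no coordinate-descent step. It also yields the minimal value of the Fr\'echet function in closed form, $\sum_{i<j}p_{ij}(1-p_{ij})+\bigl(\sum_{i<j}\min\{p_{ij},1-p_{ij}\}\bigr)^2$. As a by-product, the Fr\'echet mean set under $d_{\mathrm H}^2$ coincides with the minimizer set of $\mathbb E[d_{\mathrm H}]$, that is, the Fr\'echet median set.

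Your treatment of the modes is the same as the paper's.
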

\noindent Each mode of the distribution $\tilde G \sim \GCER(W, \alpha)$ thus corresponds to a discretization of the entries of $A_W$. 
The number of modes is $|\mathscr{F}(\tilde G)|=2^{M^*}$, 
where $M^*=\sum_{i<j}\mathbbm{1}_{\{A_W[ij]=1/2\}} \in\{0,1,\ldots,M\}$, 
since only the entries satisfying \(A_W[ij]=1/2\) admit two possible
discretizations, while all the remaining entries are uniquely determined by their position
relative to \(1/2\).

\begin{proposition}\label{prop:entropy}
If $\tilde G\sim\GCER(W,\alpha)$, then, for fixed
$W\in\mathcal W_{\mathcal V}$, its entropy $H(\tilde G)$ is
nondecreasing in $\alpha\in(0,1/2)$ and strictly increasing if
$A_W[ij]\neq1/2$ for some $i<j$. If $M^*<M$, then
\textup{(i)} $H(\tilde G)\to M^*\log(2)$ as $\alpha\to0$ and
\textup{(ii)} $H(\tilde G)\to M\log(2)$ as $\alpha\to1/2$. If
$M^*=M$, then $H(\tilde G)=M\log(2)$ for all
$\alpha\in(0,1/2)$.
\end{proposition}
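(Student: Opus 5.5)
Because the entries $A_{\tilde G}[ij]$, $i<j$, are independent Bernoulli variables with success probabilities $p_{ij}=1/(1+\pi_{ij}(W,\alpha))$, the entropy splits as a sum of binary entropies. Writing $h(p)=-p\log p-(1-p)\log(1-p)$, we have $H(\tilde G)=\sum_{i<j}h(p_{ij})$. The plan is to establish monotonicity and the two limits for each summand separately, and then add them up.

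\textbf{Rewriting each summand.} Set $a=A_W[ij]\in[0,1]$ and $r=\alpha/(1-\alpha)\in(0,1)$. Then
\[
p_{ij}=\frac{1}{1+r^{2a-1}}=\frac{(1-\alpha)^{2a-1}}{(1-\alpha)^{2a-1}+\alpha^{2a-1}}.
\]
Put $t=\log\{(1-\alpha)/\alpha\}>0$, which is strictly decreasing in $\alpha$. We get $p_{ij}=\sigma\big((2a-1)t\big)$, where $\sigma(x)=1/(1+e^{-x})$ is the logistic function.

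\textbf{Monotonicity.} The binary entropy $h(\sigma(x))$ is even in $x$ and strictly decreasing in $|x|$ for $x\ge0$. This is because $|\sigma(x)-1/2|$ increases with $|x|$, and $h$ is strictly decreasing in $|p-1/2|$. Here $|x|=|2a-1|\,t$, so:
\begin{itemize}
\item if $a=1/2$, then $x=0$ and the summand equals $\log 2$ for every $\alpha$;
\item if $a\neq1/2$, then $|x|$ strictly decreases as $\alpha$ increases, so the summand strictly increases in $\alpha$.
\end{itemize}
Summing, $H(\tilde G)$ is nondecreasing in $\alpha$, and strictly increasing as soon as some $A_W[ij]\neq1/2$. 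If $M^*=M$, every summand is constant and $H(\tilde G)=M\log 2$.

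\textbf{Limits.} As $\alpha\to1/2$ we have $t\to0$, so $x\to0$, $p_{ij}\to1/2$, and each summand tends to $\log2$; by continuity of $h$, $H(\tilde G)\to M\log2$. As $\alpha\to0$ we have $t\to\infty$. Each pair with $a\neq1/2$ then has $|x|\to\infty$, so $p_{ij}\to0$ or $1$ and $h(p_{ij})\to0$, since $h$ is continuous on $[0,1]$ with $h(0)=h(1)=0$. Each of the $M^*$ pairs with $a=1/2$ contributes exactly $\log2$, so the limit is $M^*\log 2$.

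\textbf{Main check.} The only step requiring care is the strict monotonicity of $h(\sigma(x))$ in $|x|$. One direct way is to compute
\[
\frac{d}{dx}h(\sigma(x))=-x\,\sigma(x)\{1-\sigma(x)\},
\]
using $h'(p)=\log\{(1-p)/p\}$, $\sigma'=\sigma(1-\sigma)$, and $\log\{(1-\sigma(x))/\sigma(x)\}=-x$. This derivative is strictly negative for $x>0$. The remaining steps are routine.
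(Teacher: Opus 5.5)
Your proof is correct and follows the same route as the paper's. You split $H(\tilde G)$ into a sum of edgewise binary entropies, take the limits $\alpha\to0$ and $\alpha\to1/2$ edge by edge, and settle monotonicity by the sign of a derivative; the only difference is that you differentiate through the logistic parametrization $p_{ij}=\sigma\big((2A_W[ij]-1)t\big)$ with $t=\log\{(1-\alpha)/\alpha\}$. By the chain rule this is equivalent to the paper's explicit $\partial H(\tilde G)/\partial\alpha$ formula, but your form $-x\,\sigma(x)\{1-\sigma(x)\}$ makes the sign more transparent.
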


\noindent The entropy $H(\tilde G)$ provides a convenient measure of dispersion for
$\tilde G\sim\GCER(W,\alpha)$. By 
\Autoref{prop:entropy}, in the
unimodal case $M^*=0$, the entropy of $\tilde{G}$  ranges over $(0,M\log(2))$, with the lower and
upper bounds approached only as $\alpha\to0$ and $\alpha\to1/2$,
respectively. At the opposite extreme, the case \(M^*=M\) corresponds to
\(A_W[ij]=1/2\) for all \(i<j\). In this case, the \(\GCER(W,\alpha)\) distribution reduces to an
Erdős--Rényi distribution with edge probability \(1/2\). Consequently, the
entropy of \(\tilde G\) attains its largest possible value, \(M\log (2)\), for every
\(\alpha\).
\begin{remark}
The parameters $(W,\alpha)$ are not jointly identifiable under the GCER
model: distinct parameter pairs may induce the same edge probabilities
$\Pr(A_{\tilde G}[ij]=1)$ for all $i<j$, and hence the same pmf
on $\mathcal G_{\mathcal V}$. Nevertheless, distributional quantities
such as the Fr\'echet mean set and entropy depend only on the induced
edge probabilities, not on the particular parametrization. Any two parametrizations inducing the same distribution
yield the same discretization of their location matrices at threshold $1/2$ and hence the same Fr\'echet mean set. For fixed $W$ with $M^*<M$, $\alpha$ continues to control entropy and thus dispersion.
\end{remark}

\section{Bayesian modeling and posterior computation}\label{sec:model}
We propose the following model for a vector $G^{(1:n)}=(G_1,\ldots,G_n)$ of observations on $\mathcal{G}_{\mathcal{V}}$:
\begin{equation}\label{eq:model}
\begin{aligned}
    G_\ell\mid C,\alpha&\simiid \CER(C,\alpha)& \ell=1,\ldots,n\\ 
   C\mid \alpha &\sim \GCER(W,\tau_{n_0}(\alpha))&\\
    \alpha& \sim \mathrm{TBeta}(1/2,a,b),&
    \end{aligned}
\end{equation}
where, for $a,b>0$, $\mathrm{TBeta}(1/2,a,b)$ denotes the truncated Beta distribution on $(0,1/2)$ and
the scaling function $\tau_{n_0}:(0,1/2)\rightarrow (0,1/2)$, illustrated in 
\Autoref{fig:scaling} 
in \ref{app:app_scaling}, 
is defined as
\begin{equation*}
    \tau_{n_0}(\alpha)=\frac{\alpha^{n_0}}{(1-\alpha)^{n_0}+\alpha^{n_0}}, \qquad n_0 >0.
\end{equation*}
\noindent 
The map $\tau_{n_0}$ allows the prior and sampling dispersions to differ
while preserving conjugacy for $C$, with $n_0$ controlling their relationship. Our specification is advantageous in that it yields an analytically
tractable full conditional distribution for $C$ and, under the condition
stated below, also for $\alpha$, as formalized in the next two propositions.
\begin{proposition}\label{prop:fc_Gm}
    Under model \eqref{eq:model}, 
    $C\mid G^{(1:n)},\alpha \sim \GCER(W^\prime,\tau_{n_0+n}(\alpha))$, where
    \begin{equation}\label{eq:post_GCER}
        A_{W^\prime}=\frac{n_0}{n_0+n}A_{W}+\frac{n}{n_0+n}A_{\overline{G}},
    \end{equation}
    and $A_{\overline{G}}=n^{-1}\sum_{\ell=1}^n A_{G_\ell}$ is the edge-by-edge average of the adjacency matrices of $G^{(1:n)}$. Hence, 
    the GCER distribution is a conjugate prior for the location parameter $C$ of the CER model.
\end{proposition}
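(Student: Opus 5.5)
The plan is to compute the full conditional of $C$ directly via Bayes' rule. Since the entries of $A_C$ are independent both under the prior and given $C$ in the likelihood, the computation reduces to one entry $i<j$ at a time. The key observation is that the GCER pmf has a log-linear form in $A_C[ij]$. Write $\beta=\alpha/(1-\alpha)$, so that $\log\{\tau_{n_0}(\alpha)/(1-\tau_{n_0}(\alpha))\}=n_0\log\beta$. Then, from \eqref{eq:CER2},
\begin{equation*}
\Pr(A_C[ij]=1)\big/\Pr(A_C[ij]=0)=\pi_{ij}(W,\tau_{n_0}(\alpha))^{-1}=\beta^{-n_0(2A_W[ij]-1)}.
\end{equation*}
Hence, for $x\in\{0,1\}$, the prior term for entry $ij$ is proportional to $\beta^{-n_0(2A_W[ij]-1)x}$.

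Next I will write the likelihood entrywise. For a single observation,
\begin{equation*}
p_{\CER}(G_\ell;C,\alpha)\propto\prod_{i<j}\beta^{|A_{G_\ell}[ij]-A_C[ij]|}.
\end{equation*}
For binary $x$ and $g$ one has $|g-x|=g+x-2gx$, so as a function of $x=A_C[ij]$ this term is proportional to $\beta^{x(1-2g)}$. Multiplying over $\ell$ gives $\beta^{x(n-2n A_{\overline G}[ij])}=\beta^{-n(2A_{\overline G}[ij]-1)x}$.

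Combining prior and likelihood, the full conditional of entry $ij$ has odds
\begin{equation*}
\beta^{-[n_0(2A_W[ij]-1)+n(2A_{\overline G}[ij]-1)]}=\beta^{-(n_0+n)(2A_{W'}[ij]-1)},
\end{equation*}
with $A_{W'}$ as in \eqref{eq:post_GCER}. The identity holds because the convex combination satisfies $(n_0+n)(2A_{W'}-1)=n_0(2A_W-1)+n(2A_{\overline G}-1)$. Since $\beta^{n_0+n}=\tau_{n_0+n}(\alpha)/(1-\tau_{n_0+n}(\alpha))$, these odds equal $\pi_{ij}(W',\tau_{n_0+n}(\alpha))^{-1}$. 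Independence across $i<j$ then gives $C\mid G^{(1:n)},\alpha\sim\GCER(W',\tau_{n_0+n}(\alpha))$.

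Two checks remain: $W'\in\mathcal W_{\mathcal V}$, which holds because $A_{W'}$ is a convex combination of $[0,1]$-valued entries, and $\tau_{n_0+n}(\alpha)\in(0,1/2)$. Conjugacy then follows because $C$ and $A_{\overline G}$ lie in $\mathcal W_{\mathcal V}$ (indeed $\mathcal G_{\mathcal V}\subset\mathcal W_{\mathcal V}$). There is no real obstacle. The only delicate point is the bookkeeping for the $\tau$ reparametrization, namely recognizing that $\tau_{n_0}$ is exactly the map $\beta\mapsto\beta^{n_0}$ expressed on the probability scale.
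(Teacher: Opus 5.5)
Your proof is correct and follows essentially the same route as the paper: you write the prior and the likelihood edgewise as powers of $\alpha/(1-\alpha)$ that are log-linear in $A_C[ij]$, add the exponents $n_0(1-2A_W[ij])$ and $n(1-2A_{\overline{G}}[ij])$ to obtain $(n_0+n)(1-2A_{W'}[ij])$, and normalize edge by edge. You also make explicit the identity $\tau_{n_0}(\alpha)/\{1-\tau_{n_0}(\alpha)\}=\{\alpha/(1-\alpha)\}^{n_0}$ and the checks $W'\in\mathcal W_{\mathcal V}$ and $\tau_{n_0+n}(\alpha)\in(0,1/2)$, which the paper leaves implicit.
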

\noindent The location matrix $A_{W'}$ in the full conditional distribution of
$C$ is a convex combination of the prior location matrix $A_W$ and the
empirical mean $A_{\bar G}$, with weights proportional to $n_0$ and $n$,
respectively. By
\Autoref{prop:GCER_mode}, $W'$ determines the Fr\'echet mean set
of the full conditional distribution. Thus, in general, $n_0$ controls
the influence of the prior relative to the data in determining the
posterior location.

\begin{proposition}\label{prop:fc_alpha}
Under model \eqref{eq:model}, let
$D(C)=\sum_{\ell=1}^n d_{\mathrm H}(G_\ell,C)$. If $W=W_0$, with
$A_{W_0}[ij]=1/2$ for all $i<j$, then
\begin{equation*}
\alpha\mid G^{(1:n)},C
\sim\mathrm{TBeta}(1/2,a+D(C),b+nM-D(C)).
\end{equation*}
\end{proposition}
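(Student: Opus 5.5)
We compute the full conditional of $\alpha$ up to proportionality, as the product of three factors: the CER likelihood of $G^{(1:n)}$ given $(C,\alpha)$, the GCER prior density of $C$ given $\alpha$, and the truncated Beta prior of $\alpha$. Any factor that does not depend on $\alpha$ can be dropped. The argument rests on the observation that when $W=W_0$ the conditional prior of $C$ does not depend on $\alpha$ at all. The other two factors then combine into a Beta kernel restricted to $(0,1/2)$.

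\textbf{Likelihood.} By the pmf of the CER distribution given after Definition~\ref{def:CER}, and independence of the $G_\ell$ given $(C,\alpha)$, the likelihood is
\begin{equation*}
\prod_{\ell=1}^n \alpha^{d_{\mathrm H}(G_\ell,C)}(1-\alpha)^{M-d_{\mathrm H}(G_\ell,C)}=\alpha^{D(C)}(1-\alpha)^{nM-D(C)}.
\end{equation*}

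\textbf{Prior of $C$.} Take $A_{W_0}[ij]=1/2$ for all $i<j$. Then the exponent $2A_{W_0}[ij]-1$ in \eqref{eq:CER2} is zero, so $\pi_{ij}(W_0,\tau_{n_0}(\alpha))=1$ and every edge has probability $1/2$. This holds whatever the value of the dispersion $\tau_{n_0}(\alpha)$. Hence $p_{\GCER}(C;W_0,\tau_{n_0}(\alpha))=2^{-M}$, which is constant in $\alpha$ and can be absorbed into the normalizing constant. This matches the case $M^*=M$ discussed after Proposition~\ref{prop:entropy}. This is the only step that uses the assumption $W=W_0$. Without it, the prior contributes factors of the form $\tau_{n_0}(\alpha)^{\cdots}(1-\tau_{n_0}(\alpha))^{\cdots}$, whose normalizing terms break conjugacy.

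\textbf{Combining.} Multiplying by the $\mathrm{TBeta}(1/2,a,b)$ density, which is proportional to $\alpha^{a-1}(1-\alpha)^{b-1}\mathbbm 1_{(0,1/2)}(\alpha)$, gives
\begin{equation*}
p(\alpha\mid G^{(1:n)},C)\propto \alpha^{a+D(C)-1}(1-\alpha)^{b+nM-D(C)-1}\,\mathbbm 1_{(0,1/2)}(\alpha).
\end{equation*}
Both parameters $a+D(C)$ and $b+nM-D(C)$ are positive, because $0\le D(C)\le nM$ and $a,b>0$. The kernel is therefore integrable on $(0,1/2)$, and we recognize it as the $\mathrm{TBeta}(1/2,a+D(C),b+nM-D(C))$ distribution.

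There is no substantial obstacle. The key point is to check carefully that the GCER normalization, including the $\tau_{n_0}$ dependence, genuinely drops out under $W_0$.
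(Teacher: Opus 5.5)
Your proof is correct and is essentially the paper's argument. The likelihood contributes $\alpha^{D(C)}(1-\alpha)^{nM-D(C)}$, the GCER prior of $C$ under $W_0$ equals $2^{-M}$ and so does not depend on $\alpha$, and the TBeta kernel then gives the stated conjugate update. The only difference is that the paper first writes the full conditional for general $W$ (with exponent $S(C)$ and a product of normalizing terms) and then observes that both reduce under $W_0$; this also yields the density used in the Metropolis--Hastings step for general $W$.
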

\noindent Note that if model \eqref{eq:model} is specified with $W=W_0$, the GCER
prior reduces, regardless of $n_0$, to an Erd\H{o}s--R\'{e}nyi prior
with edge probability $1/2$, a natural choice to reflect prior ignorance. In this case,
the full conditional distributions of both $C$ and $\alpha$ belong to
the same parametric families as their respective priors. 
We refer to the specification of model \eqref{eq:model} with $W=W_0$ as
the \emph{semi-conjugate} specification.

We consider two computational strategies, both implemented in the \texttt{R} package \texttt{BayesCER} publicly available at \url{https://github.com/FrancescoBarile/BayesCER}. The first is posterior
sampling, implemented under the semi-conjugate specification by a simple
Gibbs sampler requiring only direct draws from the full conditional
distributions in Propositions~\ref{prop:fc_Gm}
and~\ref{prop:fc_alpha}. For general \(W\), 
\Autoref{prop:fc_Gm} still gives the update for
\(C\), while the nonstandard full conditional of \(\alpha\) may be sampled
by a one-dimensional Metropolis--Hastings step. The second strategy is an EB procedure that avoids
posterior sampling. Let
$s_{ij}=\sum_{\ell=1}^n A_{G_\ell}[ij]$ and
$r_{ij}=n_0(1-2A_W[ij])$. Integrating out $C$ yields the marginal
 likelihood
\begin{equation}\label{eq:marginal-likelihood}
L(\alpha;G^{(1:n)})=
\prod_{i<j}
\frac{
\alpha^{s_{ij}}(1-\alpha)^{n-s_{ij}+r_{ij}}
+\alpha^{n-s_{ij}+r_{ij}}(1-\alpha)^{s_{ij}}
}{
\alpha^{r_{ij}}+(1-\alpha)^{r_{ij}}
}.
\end{equation}
Under the semi-conjugate specification, $r_{ij}=0$ for all $i<j$, so
\eqref{eq:marginal-likelihood} reduces to
\eqref{eq:mar_lik_conj}
in \ref{app:app_ml}.

Assuming that the marginal likelihood attains its maximum in
\((0,1/2)\), we choose
\[\hat{\alpha}\in\argmax_{\alpha\in(0,1/2)}
L(\alpha;G^{(1:n)}).\] Under the semi-conjugate specification, this
maximizer is unique provided suitable data conditions hold; see \ref{sec:unique}. 

Plugging $\hat{\alpha}$ into 
\Autoref{prop:fc_Gm} gives
$C\mid G^{(1:n)},\alpha=\hat{\alpha}
\sim\GCER(W',\tau_{n_0+n}(\hat{\alpha}))$, where the adjacency matrix of $W'$ is given in
\eqref{eq:post_GCER}. Under a $0$--$1$ loss, any mode of this
plug-in posterior is an EB estimator of $C$. One such mode, henceforth
denoted by $\hat C$, is obtained by thresholding $A_{W'}$ at $1/2$,
that is, by setting
$A_{\hat C}[ij]=\mathbbm{1}_{\{A_{W'}[ij]\geq1/2\}}$ for each $i<j$.

The expected edge density and approximate expected global clustering
coefficient derived in 
\Autoref{sec:cer_sec} provide population-level
summaries of network connectivity and clustering. Under Gibbs sampling,
posterior inference for these summaries is obtained by evaluating the corresponding expressions at each draw of $(C,\alpha)$; under EB, they
are estimated by plugging in $(\hat C,\hat\alpha)$. Derivations of the full conditionals in
Propositions~\ref{prop:fc_Gm} and~\ref{prop:fc_alpha} and the marginal
likelihood in \eqref{eq:marginal-likelihood} are given in \ref{sec:sm_sec3}. 

\section{Numerical experiments}\label{sec:application}
We conduct a simple simulation experiment to assess whether Gibbs and EB
inference remain computationally feasible for reasonably large $n$ and $N$
and whether posterior and plug-in estimates accurately recover
population-level network functionals. Specifically, we consider
$n\in\{10,100,1000\}$, $N\in\{10,100,1000,2000\}$, and $\alpha=0.2$.
For each $N$, the location graph $C$ was drawn from a two-community stochastic
block model with within-community edge probability $0.8$ and
between-community probability $0.1$, with nodes assigned to the two communities independently and with equal probability; conditional on $C$, the observations
were independently generated from $\CER(C,\alpha)$. For each of the
resulting $12$ settings, we generated $100$ independent datasets.
We adopted the semi-conjugate specification $W=W_0$, under which the induced prior and posterior distributions do not depend on $n_0$, and set $a=b=1$. We ran the Gibbs sampler for 11,000 iterations, discarding the
first 1,000 as burn-in and retaining every tenth subsequent draw, yielding
1,000 posterior draws. 

\begin{figure}[h]
\centering
\includegraphics[width=\linewidth,clip,
 trim={0 20mm 0 17mm}]{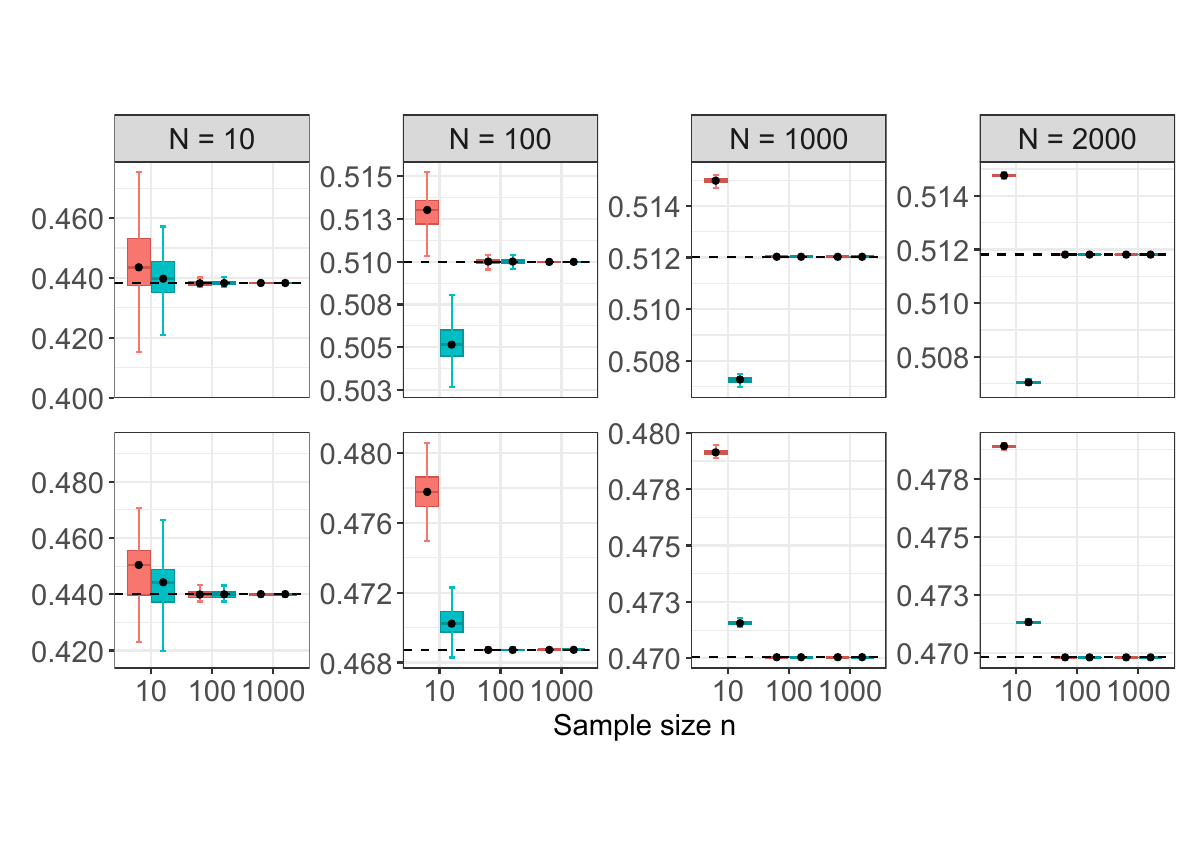}
\caption{Approximate expected global clustering coefficient (top row) and expected
edge density (bottom row). Coral and green boxplots correspond to EB and
Gibbs, respectively, and summarize 100 replications for each combination
of $N$ and $n$. Dashed black lines indicate the data-generating values;
black dots mark the median estimates.}
\label{fig:unica}
\end{figure}

\Autoref{fig:unica} reports estimates of the
approximate expected global clustering coefficient derived in
\Autoref{prop:gcc} and the expected edge density in
\eqref{eq:edge-density}. 
For Gibbs, each estimate is the posterior average of the corresponding
functional over the retained draws, whereas EB uses its plug-in value at
$(\hat C,\hat\alpha)$. At $n=10$, the two procedures exhibit distinct
finite-sample behavior: EB tends to overestimate both summaries, while
Gibbs underestimates the clustering coefficient for $N\geq100$ and is
closer to the data-generating edge density. The greatest variability occurs
for $N=n=10$. These differences largely disappear as $n$ increases: for
$n\geq100$, the estimates from both methods are tightly concentrated around
the data-generating values and are nearly indistinguishable across all
values of $N$. The main difference is computational: for example, at $N=2000$ and
$n=1000$, the median running times, including the computation of the network summaries,
were $4.11$ seconds for EB and
$3\,424.13$ seconds for Gibbs,
on a machine with an Intel Xeon Platinum 8370C processor (16 logical cores) and 32 GB of RAM, with additional timing results reported in \Autoref{tab:computational_time} in \ref{sec:sm_sec4}. 
Both procedures complete the largest setting considered, but EB provides
comparable point estimates at a small fraction of the computational cost. The results above concern the representative case $\alpha=0.2$;
corresponding global summaries for $\alpha\in\{0.1,0.3\}$, together with
the normalized Hamming error for $C$ and the absolute error for $\alpha$,
are reported in the Supplementary Material
and support the same conclusions.

\paragraph{AI use declaration} During the preparation of this work, the authors used ChatGPT (GPT-5.6, OpenAI) to identify potential errors and improve the mathematical exposition; they critically reviewed and selectively incorporated its suggestions and take full responsibility for the article’s content.

\putbib
\end{bibunit}

\clearpage

\begin{bibunit}

\bigskip

\allowdisplaybreaks

\begin{center}
{\huge
\bf Appendices}
\end{center}

These appendices collect the node-specific results, all proofs, details on
Bayesian computation, additional numerical experiments, and computational
timing results. Recall that, for a graph \(G\), \(|G|_i\) denotes the degree
of node \(i\). Moreover, let $T_i(G)$ denote the number of triangles containing node $i$, and let
$P_i^{(2)}(G)=\binom{|G|_i}{2}$ 
denote the number of non-oriented paths of length two with $i$ as their middle node.

\appendix 

\makeatletter
\renewcommand{\thesection}{Appendix \Alph{section}}
\makeatother

\section{CER distribution: node-specific results and proofs}\label{sec:proof_S1}
The exact and asymptotic distributions of the degree of a fixed node are given in the following proposition.
\begin{proposition}
\label{prop:edges_i}
Let $\tilde G \sim \CER(C,\alpha)$. 
Then, for any node $i \in \mathcal{V}$, the degree of the $i$th node in $\tilde G$ satisfies
    $|\tilde G|_{i} \stackrel{d}{=} W_{1i} + W_{2i}$, 
    where $W_{1i} \sim \mathrm{Binomial}\left(|C|_i, 1-\alpha\right)$ and $W_{2i} \sim \mathrm{Binomial}\left(N-1-|C|_i, \alpha\right)$ are independent. In addition, as
$N\to\infty$,
 \begin{equation}\label{eq:CLT_G_i}
        \frac{|\tilde G|_{i} - \big((N-1)\alpha + |C|_i (1-2\alpha)\big)}{\sqrt{(N-1)\alpha(1-\alpha)}} 
        \stackrel{d}{\longrightarrow} Z \sim \mathrm{N}(0,1).
    \end{equation}
\end{proposition}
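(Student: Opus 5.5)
The plan is to mirror the argument for \Autoref{prop:edges}, restricted to the $N-1$ potential edges incident to node $i$. Write
\[
|\tilde G|_i=\sum_{j\neq i}\bar A_{\tilde G}[ij].
\]
By \Autoref{def:CER}, these $N-1$ indicators are independent, since each corresponds to a distinct upper-triangular entry. Split the index set $\{j\neq i\}$ into $J_1=\{j:\bar A_C[ij]=1\}$, with $|J_1|=|C|_i$, and $J_0=\{j:\bar A_C[ij]=0\}$, with $|J_0|=N-1-|C|_i$.

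For $j\in J_1$, equation \eqref{eq:CER1} gives $\Pr(\bar A_{\tilde G}[ij]=1)=1-\alpha$. For $j\in J_0$, it gives $\Pr(\bar A_{\tilde G}[ij]=1)=\alpha$. Hence $W_{1i}=\sum_{j\in J_1}\bar A_{\tilde G}[ij]$ is a sum of $|C|_i$ i.i.d.\ Bernoulli$(1-\alpha)$ variables, and $W_{2i}=\sum_{j\in J_0}\bar A_{\tilde G}[ij]$ is a sum of $N-1-|C|_i$ i.i.d.\ Bernoulli$(\alpha)$ variables. The two sums are independent because they involve disjoint sets of independent entries. This proves the distributional identity.

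For the central limit statement, first compute the moments:
\[
\mathbb E|\tilde G|_i=|C|_i(1-\alpha)+(N-1-|C|_i)\alpha=(N-1)\alpha+|C|_i(1-2\alpha),
\]
\[
\mathrm{Var}|\tilde G|_i=(N-1)\alpha(1-\alpha).
\]
The variance is exact and does not depend on $C$, because Bernoulli$(\alpha)$ and Bernoulli$(1-\alpha)$ have the same variance. Here the location $C=C_N$ may vary arbitrarily with $N$, so we cannot simply apply the classical i.i.d.\ CLT to each binomial separately; for instance, $|C|_i$ could stay bounded. Instead, view $|\tilde G|_i-\mathbb E|\tilde G|_i$ as a triangular array of $N-1$ independent centred summands $X_{N,j}$. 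Each satisfies $|X_{N,j}|\le 1$ and $\mathrm{Var}(X_{N,j})=\alpha(1-\alpha)$, with $\alpha\in(0,1/2)$ fixed.

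The main (mild) obstacle is exactly this dependence on an arbitrary sequence $C_N$. It is handled by Lyapunov's condition with third moments:
\[
\frac{\sum_j \mathbb E|X_{N,j}|^3}{s_N^3}\le\frac{N-1}{\{(N-1)\alpha(1-\alpha)\}^{3/2}}\to0,
\]
where $s_N^2=(N-1)\alpha(1-\alpha)$. The bound holds uniformly in $C_N$. The Lindeberg--Feller (Lyapunov) CLT then gives \eqref{eq:CLT_G_i}.

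Alternatively, one can argue via characteristic functions. The characteristic function of the standardized sum factorizes into two products that depend only on the two block sizes. A second-order expansion, uniform in the split $|C|_i$, then yields $e^{-t^2/2}$.
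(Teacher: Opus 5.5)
Your proposal is correct and follows essentially the same route as the paper. Both split the degree into $|C|_i$ Bernoulli$(1-\alpha)$ and $N-1-|C|_i$ Bernoulli$(\alpha)$ independent incident entries, and both obtain the CLT from Lyapunov's condition for bounded summands with common variance $\alpha(1-\alpha)$; the only differences are that the paper verifies the condition for arbitrary $\delta>0$ via $\mathbb{E}|X_j-\mu_j|^{2+\delta}\le\alpha(1-\alpha)$, whereas you take $\delta=1$ with the cruder bound $\mathbb{E}|X_{N,j}|^3\le 1$, and you make explicit the triangular-array structure (dependence of $C$ on $N$) that the paper leaves implicit.
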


\begin{proof}[Proof of 
\Autoref{prop:edges_i}] The first result follows immediately from the fact that the elements of $A_{\tilde G}$ are independent Bernoulli random variables. Indeed, following 
\Autoref{def:CER},
we have that, for any $i<j\in \mathcal{V}$, $A_{\tilde G}[ij]\sim \text{Bern}(\alpha)$ if $A_C[ij]=0$, while $A_{\tilde G}[ij]\sim \text{Bern}(1-\alpha)$ if $A_C[ij]=1$.
For the asymptotic result, for any $i \in \mathcal{V}$, we consider $N-1$ independent Bernoulli random variables $X_1, \ldots, X_{N-1}$ defined as
\begin{equation*}
    X_j=\begin{cases}
        \bar{A}_{\tilde G}[ij]& \text{if }j=1,\ldots,i-1;\\
        \bar{A}_{\tilde G}[i(j+1)]& \text{if }j=i,\ldots,N-1.
    \end{cases}
\end{equation*}
Each $X_j$ takes values in $\{0,1\}$ and has finite mean $\mu_j$, equal to either $\alpha$ or $1-\alpha$, 
and finite variance $\sigma_j^2 = \alpha(1-\alpha)$. A direct application of Lyapunov central limit theorem \citep[see, e.g.,][]{Bil17} shows that the result in \eqref{eq:CLT_G_i} holds if  \begin{equation}\label{eq:lya}
\lim_{N \to \infty} \frac{1}{s_{N-1}^{2+\delta}} \sum_{j=1}^{N-1} \mathbb{E} \left[ |X_j - \mu_j|^{2+\delta} \right] = 0, \text{ for some }\delta>0,
\end{equation}
where \( s_{N-1}^2 = \sum_{j=1}^{N-1} \sigma_j^2=(N-1)\alpha(1-\alpha)\).
We observe that, for every $\delta>0$,
\begin{align*}
\mathbb{E} \left[ |X_j - \mu_j|^{2+\delta} \right] 
&= (1 - \mu_j) \mu_j^{2+\delta} + \mu_{j} (1 - \mu_{j})^{2+\delta} \\
&= \mu_{j} (1 - \mu_{j}) \left( \mu_{j}^{1+\delta} + (1 - \mu_{j})^{1+\delta} \right) \\
&\leq \mu_{j} (1 - \mu_{j})=\alpha(1-\alpha).
\end{align*}
Thus, for any \(\delta > 0\),
\begin{align}\label{eq:lya2}
\frac{1}{s_{N-1}^{2+\delta}} \sum_{j=1}^{N-1} \mathbb{E} \left[ |X_j - \mu_j|^{2+\delta} \right] 
&\leq \frac{\sum_{j=1}^{N-1}\alpha(1 - \alpha)}{((N-1)\alpha(1 - \alpha))^{\frac{\delta}{2}+1}} = \frac{1}{((N-1)\alpha(1 - \alpha))^{\frac{\delta}{2}}}.
\end{align}
Since the right-hand side of \eqref{eq:lya2} converges to zero as $N \to \infty$, 
the left-hand side, being non-negative by definition, must also converge to zero. 
Therefore, the Lyapunov condition \eqref{eq:lya} is satisfied, and \eqref{eq:CLT_G_i} follows.
\end{proof}

The proof of \Autoref{prop:edges} follows directly from \Autoref{prop:edges_i}, as shown below.

\begin{proof}[Proof of 
\Autoref{prop:edges}]
The argument follows exactly the same lines as in the previous proof, with the only difference that the $N-1$ independent Bernoulli random variables are replaced by the $M=\binom{N}{2}$ independent Bernoulli variables given by the collection $\{A_{\tilde G}[ij] : i<j,\; i,j\in\mathcal{V}\}$.
\end{proof}

The expected numbers of triangles and non-oriented paths of length two involving a fixed node are given in the following proposition.

\begin{proposition}\label{prop:p2}
Let $\tilde G \sim \CER(C,\alpha)$.
Then, for any node $i \in \mathcal{V}$:
\begin{itemize}
   \item[(a)] the expected number of triangles in $\tilde G$ containing node $i$  is 
\begin{align*}
    \mathbb{E}[T_i(\tilde G)]=&\alpha^3\binom{N-1}{2}+\alpha^2(1-2\alpha)\left((N-3)|C|_i+|C|\right)
    \\& 
    +\alpha\left(1-2\alpha\right)^2 \left(
     \frac{|C|_i\left(|C|_i-3\right)}{2} + \sum_{j \neq i} \bar{A}_C[ij]|C|_j \right)
    +\left(1-2\alpha\right)^3T_i(C).
\end{align*}
   \item[(b)] the expected number of non-oriented paths of length two in $\tilde G$ having $i$ as their middle node is 
    \begin{equation*}
        \mathbb{E}[P_{i}^{(2)}(\tilde G)] = 
         \alpha^{2}
         \binom{N-1}{2} + 
         \frac{1}{2}
         (1-2\alpha) |C|_i \Bigl( (1-2\alpha) |C|_i + 2\alpha(N-1) - 1 \Bigl).
    \end{equation*} 
\end{itemize}
\end{proposition}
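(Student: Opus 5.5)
The plan is to reduce everything to linearity of expectation over independent edge indicators. The key device is to write each edge indicator of $\tilde G$ as an affine function of the location graph. By \Autoref{def:CER}, for $i\neq j$ we have $\mathbb{E}[\bar A_{\tilde G}[ij]]=p_{ij}$, where
\[
p_{ij}=\alpha+(1-2\alpha)\bar A_C[ij],
\]
since the mean is $\alpha$ when $\bar A_C[ij]=0$ and $1-\alpha$ when it equals $1$. Distinct unordered pairs are independent, so the expectation of any product of indicators on distinct pairs factorizes into the corresponding product of the $p_{ij}$. Writing $\beta=1-2\alpha$ and $c_{jk}=\bar A_C[jk]$, each factor is $\alpha+\beta c_{jk}$.

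For part (b), I will use
\[
P_i^{(2)}(\tilde G)=\sum_{\{j,k\}\subset\mathcal V\setminus\{i\}}\bar A_{\tilde G}[ij]\bar A_{\tilde G}[ik].
\]
The two factors involve distinct pairs, so the expectation is $\sum_{j<k;\,j,k\neq i}p_{ij}p_{ik}$. Expanding $(\alpha+\beta c_{ij})(\alpha+\beta c_{ik})$ and summing term by term gives three contributions:
\begin{itemize}
\item[(i)] $\alpha^2\binom{N-1}{2}$ from the constant part;
\item[(ii)] $\alpha\beta(N-2)|C|_i$ from the cross terms, since each neighbour $j$ of $i$ in $C$ pairs with $N-2$ choices of $k$;
\item[(iii)] $\beta^2\binom{|C|_i}{2}$ from the product $c_{ij}c_{ik}$.
\end{itemize}
Collecting these should reproduce $\tfrac12\beta|C|_i\bigl(\beta|C|_i+2\alpha(N-1)-1\bigr)$. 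Checking this reduces to the identity $\alpha(N-2)-\tfrac{\beta}{2}=\tfrac12(2\alpha(N-1)-1)$, which holds because $\tfrac{\beta}{2}=\tfrac12-\alpha$.

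For part (a), I will write $T_i(\tilde G)=\sum_{j<k;\,j,k\neq i}\bar A_{\tilde G}[ij]\bar A_{\tilde G}[ik]\bar A_{\tilde G}[jk]$. Its expectation is $\sum p_{ij}p_{ik}p_{jk}$, and expanding by the number of $\beta$ factors gives four groups.
\begin{itemize}
\item[(i)] The constant term is $\alpha^3\binom{N-1}{2}$.
\item[(ii)] The linear terms are $\alpha^2\beta\,S_1$ with $S_1=\sum_{j<k}(c_{ij}+c_{ik}+c_{jk})$. The first two summands give $(N-2)|C|_i$ as in (b). The third counts edges of $C$ not incident to $i$, namely $|C|-|C|_i$. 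Hence $S_1=(N-3)|C|_i+|C|$.
\item[(iii)] The quadratic terms are $\alpha\beta^2 S_2$ with $S_2=\sum_{j<k}(c_{ij}c_{ik}+c_{ij}c_{jk}+c_{ik}c_{jk})$. The first summand gives $\binom{|C|_i}{2}$. The other two together count ordered configurations in which $j$ is a neighbour of $i$ and $k\neq i,j$ is a neighbour of $j$, i.e.\ $\sum_{j\neq i}c_{ij}(|C|_j-1)=\sum_{j\neq i}\bar A_C[ij]|C|_j-|C|_i$. Hence $S_2=\tfrac{|C|_i(|C|_i-1)}{2}-|C|_i+\sum_{j\neq i}\bar A_C[ij]|C|_j$, which equals $\tfrac{|C|_i(|C|_i-3)}{2}+\sum_{j\neq i}\bar A_C[ij]|C|_j$.
\item[(iv)] The cubic term is $\beta^3T_i(C)$.
\end{itemize}

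The only delicate step is the combinatorial bookkeeping in (iii) for part (a). There I must avoid double counting when symmetrizing over the ordered pair $(j,k)$, and I must correctly subtract the path that returns to $i$; this subtraction is the source of the $-3$ in $|C|_i(|C|_i-3)/2$. As a sanity check, I will verify that summing over $i$ and dividing by $3$ recovers \Autoref{prop:triangles}(a).
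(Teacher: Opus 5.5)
Your proof is correct. Part (a) follows the paper's argument. The paper writes $\mathbb{E}[T_i(\tilde G)]=\alpha^3\sum_{j<k;\,j,k\neq i}\bigl(\frac{1-\alpha}{\alpha}\bigr)^{\bar A_C[ij]+\bar A_C[ik]+\bar A_C[jk]}$ and uses $\bigl(\frac{1-\alpha}{\alpha}\bigr)^{\bar A_C[ij]}=1+\frac{1-2\alpha}{\alpha}\bar A_C[ij]$; multiplied by $\alpha$, this is exactly your factor $\alpha+\beta c_{ij}$. It then evaluates the same four sums with the same counting, including $\sum_{j\neq i}\bar A_C[ij](|C|_j-1)$ for the mixed quadratic terms.

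Part (b) is where you take a different route. You expand $P_i^{(2)}(\tilde G)$ over unordered pairs $\{j,k\}$ as a sum of products of two independent indicators. The paper instead keeps $P_i^{(2)}(\tilde G)=\binom{|\tilde G|_i}{2}$ and computes $\mathbb{E}[|\tilde G|_i(|\tilde G|_i-1)]=\operatorname{Var}(|\tilde G|_i)+\mathbb{E}[|\tilde G|_i](\mathbb{E}[|\tilde G|_i]-1)$. For this it uses the mean $\alpha(N-1)+(1-2\alpha)|C|_i$ and variance $\alpha(1-\alpha)(N-1)$ implied by the binomial decomposition of Proposition~\ref{prop:edges_i}.

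The paper's route is shorter and needs no counting, since it recycles the degree distribution. Yours is self-contained and handles (a) and (b) with one device. It also gives the answer in the form $\alpha^2\binom{N-1}{2}+\alpha\beta(N-2)|C|_i+\beta^2\binom{|C|_i}{2}$, which parallels your four-group expansion in (a). The cost is the small identity you check to reach the stated closed form. Your planned sanity check, summing over $i$ and dividing by $3$, is exactly how the paper then derives Proposition~\ref{prop:triangles}(a).
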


\begin{proof}[Proof of \Autoref{prop:p2}]
We prove parts (a) and (b) separately. 
\begin{itemize}
    \item[(a)] The random variable \(T_i(\tilde G)\) can be written as $$T_i(\tilde G)=\sum_{\mathcal{J}_i}
\bar{A}_{\tilde G}[ij]\bar{A}_{\tilde G}[ik]\bar{A}_{\tilde G}[jk],$$
where $\mathcal{J}_i=\{(j,k)\,:\,j<k, j\neq i, k\neq i\}$. Thus, the expected value of \(T_i\) is given by
\begin{align*}
    \mathbb{E}[T_i(\tilde G)] = \sum_{\mathcal{J}_i} \mathrm{Pr}(\bar{A}_{\tilde G}[ij]=1, \bar{A}_{\tilde G}[jk]=1, \bar{A}_{\tilde G}[ik]=1).
\end{align*}
Since $\tilde G\sim\CER(C,\alpha)$, we obtain:
\begin{equation}\label{eq:ETi}
\mathbb{E}[T_i(\tilde G)] = \alpha^{3} \sum_{\mathcal{J}_i} \left(\frac{1-\alpha}{\alpha}\right)^{\bar{A}_C[ij]+\bar{A}_C[ik]+\bar{A}_C[jk]}.
\end{equation}
We observe that, for any pair $i<j$, 
\begin{equation*}
    \left(\frac{1-\alpha}{\alpha}\right)^{\bar{A}_C[ij]}=1+\frac{1-2\alpha}{\alpha}\bar{A}_C[ij], 
\end{equation*}
which implies that for any triplet of distinct nodes $\{i,j,k\}$ such that $j<k$,
\begin{align*}
    \left(\frac{1-\alpha}{\alpha}\right)^{\bar{A}_C[ij]+\bar{A}_C[ik]+\bar{A}_C[jk]}=&\left(1+\frac{1-2\alpha}{\alpha}\bar{A}_C[ij]\right)\left(1+\frac{1-2\alpha}{\alpha}\bar{A}_C[ik]\right)\left(1+\frac{1-2\alpha}{\alpha}\bar{A}_C[jk]\right).
\end{align*}
By expanding the product on the right-hand side of the previous equation and taking the sum of both sides over $\mathcal{J}_i$, we get
\begin{align}\label{eq:sum_expanded1i}
\sum_{\mathcal{J}_i}&\left(\frac{1-\alpha}{\alpha}\right)^{\bar{A}_C[ij]+\bar{A}_C[ik]+\bar{A}_C[jk]}=\sum_{\mathcal{J}_i}1\\\label{eq:sum_expanded2i}
    &\qquad+\frac{1-2\alpha}{\alpha}\sum_{\mathcal{J}_i}(\bar{A}_C[ij]+\bar{A}_C[ik]+\bar{A}_C[jk])\\\label{eq:sum_expanded3i}
    &\qquad+\left(\frac{1-2\alpha}{\alpha}\right)^2\sum_{\mathcal{J}_i}(\bar{A}_C[ij]\bar{A}_C[ik]+\bar{A}_C[ij]\bar{A}_C[jk]+\bar{A}_C[ik]\bar{A}_C[jk])\\\label{eq:sum_expanded4i}
    &\qquad+\left(\frac{1-2\alpha}{\alpha}\right)^3 \sum_{\mathcal{J}_i}\bar{A}_C[ij]\bar{A}_C[ik]\bar{A}_C[jk].
\end{align}
We further observe that:
\begin{itemize}
    \item[(a1)] The number of distinct pairs $\{j,k\}$ such that $j,k\neq i$ is $\binom{N-1}{2}$, and thus the term in \eqref{eq:sum_expanded1i} gives 
    \begin{align*}
 \sum_{\mathcal{J}_i}1&=\binom{N-1}{2}.
 \end{align*}
 \item[(a2)] 
The terms $\bar{A}_C[ij]$ and $\bar{A}_C[ik]$ each appear exactly $N-2$ times in the sum over $\mathcal{J}_i$, 
because for each neighbor $j$ of node $i$, there are $N-2$ possible nodes $k \neq i,j$. 
Hence, the sum over edges incident to $i$ is
\[
\sum_{\mathcal{J}_i} (\bar{A}_C[ij] + \bar{A}_C[ik]) = (N-2) |C|_i.
\]
Moreover, \(\sum_{\mathcal{J}_i} \bar{A}_C[jk]\) counts the edges in \(C\) that do not involve node \(i\). That is,
\begin{equation*}
    \sum_{\mathcal{J}_i} \bar{A}_C[jk] = |C| - |C|_i.
\end{equation*}
In summary, the sum in \eqref{eq:sum_expanded2i} simplifies to
\[
\sum_{\mathcal{J}_i} (\bar{A}_C[ij] + \bar{A}_C[ik] + \bar{A}_C[jk])=(N-3)|C|_i + |C|,
\]

\item[(a3)]
The term $\sum_{\mathcal{J}_i} \bar{A}_C[ij]\bar{A}_C[ik]$ counts all wedges centered at node $i$, which is simply
\[
\sum_{\mathcal{J}_i} \bar{A}_C[ij]\bar{A}_C[ik] = \binom{|C|_i}{2}.
\]

Moreover,  $\sum_{\mathcal{J}_i} (\bar{A}_C[ij]\bar{A}_C[jk] + \bar{A}_C[ik]\bar{A}_C[jk])$ counts wedges where $i$ is adjacent to one vertex and the edge $(j,k)$ forms the base. This sum can be written as
\[
\sum_{\mathcal{J}_i} (\bar{A}_C[ij]\bar{A}_C[jk] + \bar{A}_C[ik]\bar{A}_C[jk]) 
= 
\sum_{j \neq i} \bar{A}_C[ij] \, (|C|_j - 1).
\]
As a result, the sum in \eqref{eq:sum_expanded3i} simplifies to
\begin{align*}
\sum_{\mathcal{J}_i} \bigl( \bar{A}_C[ij]\bar{A}_C[ik] + \bar{A}_C[ij]\bar{A}_C[jk] + \bar{A}_C[ik]\bar{A}_C[jk] \bigr)
&= \binom{|C|_i}{2} + 
\sum_{j \neq i} \bar{A}_C[ij] \, (|C|_j - 1)\\
&= \frac{|C|_i\left(|C|_i-3\right)}{2} + \sum_{j \neq i} \bar{A}_C[ij]|C|_j.
\end{align*}
\item[(a4)] $\sum_{\mathcal{J}_i}\bar{A}_C[ij]\bar{A}_C[ik]\bar{A}_C[jk]$ counts the number of triangles in $C$ involving node $i$, hence the sum in \eqref{eq:sum_expanded4i} can be written as
\begin{equation*}
\sum_{\mathcal{J}_i}\bar{A}_C[ij]\bar{A}_C[ik]\bar{A}_C[jk]=T_i(C).
\end{equation*}
\end{itemize}
By replacing the results in (a1)--(a4) into \eqref{eq:sum_expanded1i}--\eqref{eq:sum_expanded4i}, and by multiplying by $\alpha^3$, the equation in \eqref{eq:ETi} becomes
\begin{align*}
    \mathbb{E}[T_i(\tilde G)]=&\alpha^3\binom{N-1}{2}+\alpha^2(1-2\alpha)\left((N-3)|C|_i+|C|\right)\\
     &+\alpha\left(1-2\alpha\right)^2 \left(
     \frac{|C|_i\left(|C|_i-3\right)}{2} + \sum_{j \neq i} \bar{A}_C[ij]|C|_j \right)
    +\left(1-2\alpha\right)^3T_i(C).
\end{align*}
\item[(b)]
Recall that $P_i^{(2)}(G)=\binom{|G|_i}{2}$. Since $|\tilde G|_{i}$ is the sum of two binomial variables with expected value:
\[
\mathbb{E}[|\tilde G|_{i}] = \alpha (N-1) + (1-2\alpha) |C|_{i},
\]
and variance:
\[
\operatorname{Var}(|\tilde G|_{i}) = \alpha (1-\alpha) (N-1),
\]
we can express the expectation of \( |\tilde G|_{i}(|\tilde G|_{i}-1) \) as
\[
\mathbb{E}[|\tilde G|_{i}(|\tilde G|_{i}-1)] = \mathbb{E}[|\tilde G|_{i}^2] - \mathbb{E}[|\tilde G|_{i}] = \operatorname{Var}(|\tilde G|_{i}) + \mathbb{E}[|\tilde G|_{i}](\mathbb{E}[|\tilde G|_{i}] - 1).
\]
Expanding this, we obtain
\[
\mathbb{E}[|\tilde G|_{i}(|\tilde G|_{i}-1)] = \alpha^{2}(N-1)(N-2) + (1-2\alpha) |C|_{i} \left[ (1-2\alpha) |C|_{i} + 2\alpha(N-1) - 1 \right].
\]
Multiplying by $\frac{1}{2}$, we obtain
 \begin{equation*}
        \mathbb{E}[P_{i}^{(2)}(\tilde G)] = 
         \alpha^{2}
         \binom{N-1}{2} + 
         \frac{1}{2}
         (1-2\alpha) |C|_i \Bigl( (1-2\alpha) |C|_i + 2\alpha(N-1) - 1 \Bigl).
\end{equation*} 
\end{itemize}
\end{proof}

\Autoref{prop:p2} also provides the starting point for the proof of \Autoref{prop:triangles}, as shown below.

\begin{proof}[Proof of \Autoref{prop:triangles}]
We prove parts (a) and (b) separately. 
\begin{itemize}
    \item[(a)]
The global and local triangle counts are related by the identity
\begin{align}\label{eq:T_Ti}
    T(G)= \frac{1}{3}\sum_{i=1}^{N} T_i(G).
\end{align}
By combining \eqref{eq:T_Ti} and \eqref{eq:ETi}, we get
\begin{align}\notag
\mathbb{E}[T(\tilde G)]&=\frac{\alpha^3}{3}\sum_{i=1}^N\binom{N-1}{2}+\frac{\alpha^2(1-2\alpha)}{3}\sum_{i=1}^N\left((N-3)|C|_i+|C|\right)\\\label{eq:ET_sums}
          &+\frac{\alpha\left(1-2\alpha\right)^2}{3} \sum_{i=1}^N\left(\frac{|C|_i(|C|_i-3)}{2}+\sum_{j\neq i}\bar{A}_C[ij]|C|_j\right)
    +\frac{\left(1-2\alpha\right)^3}{3}\sum_{i=1}^N T_i(C).
\end{align}  
We observe that:
\begin{itemize}
    \item[(a1)] It is immediate to check that 
    \begin{equation*}
        \sum_{i=1}^N\binom{N-1}{2}=3\binom{N}{3}.
    \end{equation*}
 \item[(a2)] Given that $\sum_{i=1}^N(N-3)|C|_i=2(N-3)|C|$ and $\sum_{i=1}^N |C|=N|C|$, we conclude that
 \begin{align*}
 \sum_{i=1}^N((N-3)|C|_i+|C|)&=3(N-2)|C|.
  \end{align*}
 \item[(a3)] Given that $\sum_{i=1}^N |C|_i=2|C|$ and 
 \begin{align*}
     \sum_{i=1}^N\sum_{j\neq i}\bar{A}_C[ij]|C|_j&=\sum_{j=1}^N |C|_j\sum_{i\neq j}\bar{A}_C[ij]=\sum_{j=1}^N |C|_j^2, 
 \end{align*}
 we conclude that
 \begin{align*}
     \sum_{i=1}^N \left(
     \frac{|C|_i(|C|_i-3)}{2}
     +\sum_{j\neq i}\bar{A}_C[ij]|C|_j\right)&= \frac{1}{2}\sum_{i=1}^N |C|_i^2
     -3|C|+\sum_{i=1}^N|C|_i^2\\
     &=\frac{3}{2}\left(\sum_{i=1}^N|C|_i^2-2|C|\right).
 \end{align*}
\item[(a4)] As in \eqref{eq:T_Ti}, we have $\sum_{i=1}^N T_i(C)= 3T(C)$.
\end{itemize}
By replacing the results in (a1)--(a4) into \eqref{eq:ET_sums}, we get
\begin{align*}
    \mathbb{E}[T(\tilde G)]=&\alpha^3\binom{N}{3}+\alpha^2(1-2\alpha)(N-2)|C|+\frac{1}{2}
    \alpha\left(1-2\alpha\right)^2\left(\sum_{i=1}^N|C|_i^2-2|C|\right)+\left(1-2\alpha\right)^3T(C).
\end{align*}
    \item[(b)]
    As for part (b), it is immediate to check that: 
\begin{itemize}
    \item[(b1)] 
        $\sum_{i=1}^N\binom{N-1}{2}=3\binom{N}{3}$;
     \item[(b2)] $\sum_{i=1}^N |C|_i=2|C|$.
\end{itemize}
We then conclude that
 \begin{equation*}
        \mathbb{E}[P^{(2)}(\tilde G)] =
        3\alpha^2 \binom{N}{3}
        + (1-2\alpha)(2(N-1)\alpha-1)|C|
        + \frac{1}{2} (1-2\alpha)^2 \sum_{i=1}^N |C|_i^2.
    \end{equation*}
\end{itemize}
\end{proof}

\section{GCER distribution: proofs}\label{sec:proof_S2}
\begin{proof}[Proof of \Autoref{prop:GCERmonotone}]
According to \eqref{eq:CER2}, 
for any $i<j$,
\begin{align*}
\Pr(A_{\tilde G}[ij]=g) & = 
\left[1+\left(\frac{\alpha}{1-\alpha}\right)^{2A_W[ij]-1} \right]^{-g} \left\{1- 
\left[1+\left(\frac{\alpha}{1-\alpha}\right)^{2A_W[ij]-1} \right]^{-1}
\right\}^{1-g} \\ 
& =
\left[1+\left(\frac{\alpha}{1-\alpha}\right)^{2{A}_W[ij]-1} \right]^{-g} 
\left[
\frac{\left(\frac{\alpha}{1-\alpha}\right)^{2{A}_W[ij]-1}}{1+\left(\frac{\alpha}{1-\alpha}\right)^{2{A}_W[ij]-1}}
\right]^{1-g} \\ 
& = \left(\frac{\alpha}{1-\alpha}\right)^{ g \left(1-2{A}_W[ij]\right)} \left[
\frac{\left(\frac{\alpha}{1-\alpha}\right)^{2{A}_W[ij]-1}}{1+\left(\frac{\alpha}{1-\alpha}\right)^{2{A}_W[ij]-1}}
\right],
\end{align*}
for $g \in \{ 0,1\}$.
It follows that
\begin{align*}
    p_{\GCER}(G;W,\alpha) = \left(\frac{\alpha}{1-\alpha}\right)^{\sum_{i<j} A_G[ij] \left(1-2{A}_W[ij]\right) } \frac{1}{\prod_{i<j} 
    \left[
    1+\left(\frac{\alpha}{1-\alpha}\right)^{1-2{A}_W[ij]}
    \right] }.
\end{align*}
Then, we can evaluate the following ratio as
\begin{align*}
    &\frac{p_{\GCER}(G_2;W,\alpha)}{p_{\GCER}(G_1;W,\alpha)}   = \frac{\left(\frac{\alpha}{1-\alpha}\right)^{\sum_{i<j} A_{G_2}[ij] \left(1-2{A}_W[ij]\right) } }{
    \left(\frac{\alpha}{1-\alpha}\right)^{\sum_{i<j} A_{G_1}[ij] \left(1-2{A}_W[ij]\right) }  
    } = \left(\frac{\alpha}{1-\alpha}\right)^{\sum_{i<j}
    \left[
    A_{G_2}[ij] \left(1-2{A}_W[ij]\right) -  A_{G_1}[ij] \left(1-2{A}_W[ij]\right) \right] } \\
    &\qquad = \left(\frac{\alpha}{1-\alpha}\right)^{
    \sum_{i<j} \left[
    A_{G_2}[ij] - 2 
    A_{G_2}[ij] {A}_W[ij] - 
    A_{G_1}[ij] + 2 
    A_{G_1}[ij] {A}_W[ij] + {A}_W[ij] - {A}_W[ij] 
    \right]
    } \\
    &\qquad = \left(\frac{\alpha}{1-\alpha}\right)^{
    \sum_{i<j} \left\{ \left[ {A}_W[ij] + 
    A_{G_2}[ij] \left( 1- 2 {A}_W[ij] \right) \right] - 
    \left[ {A}_W[ij] + 
    A_{G_1}[ij] \left(  1 - 2  
    {A}_W[ij] \right) \right]
    \right\}
    } \\
 &\qquad = \left(\frac{\alpha}{1-\alpha}\right)^{
    \sum_{i<j} \left[ | A_{G_2}[ij] - {A}_W[ij] |  - 
     | A_{G_1}[ij] - {A}_W[ij] | 
    \right] } \\
    &\qquad = \left(\frac{1-\alpha}{\alpha}\right)^{d_\mathrm{T}(G_1,W)- d_\mathrm{T}(G_2,W)}, 
\end{align*}
where we used the identity $|g-w|=w+g(1-2w)$, for $g\in \left\{0, 1\right\}$ and $w \in \left[ 0, 1\right]$.\\
Then,
\begin{align*}
    \frac{p_{\GCER}(G_2;W,\alpha)}{p_{\GCER}(G_1;W,\alpha)}  
    \geq 
     1 &\Longleftrightarrow \left(d_\mathrm{T}(G_1,W)- d_\mathrm{T}(G_2,W)\right) \log\left(\frac{1-\alpha}{\alpha}\right) \geq 
     0\\
    &
     \Longleftrightarrow d_\mathrm{T}(G_1,W) 
    \geq 
    d_\mathrm{T}(G_2,W).   
\end{align*}
\end{proof}

\begin{proof}[Proof of \Autoref{prop:GCER_mode}] 
The Fréchet function, under the Hamming distance, for a random graph \(\tilde G\sim \GCER(W,\alpha)\), is defined as
\[
\Psi(L)
=
\mathbb{E}\left[d_{\mathrm H}^2(\tilde G,L)\right],
\qquad L\in\mathcal{G}_{\mathcal V},
\]
where the expectation is taken with respect to the distribution of $\tilde{G}$. Since \(\tilde G\) and \(L\) are binary graphs and edges are independent
\begin{align*}
    d_{\mathrm H}^2(\tilde G,L)
& =\left(
\sum_{i<j} \left|A_{\tilde G}[ij]-A_L[ij] \right| \right)^2 \\ & =
\sum_{i<j}\left( \left|A_{\tilde G}[ij]-A_L[ij] \right| \right)^2 + 2 \sum_{(i,j)\prec (k,\ell)} \left|A_{\tilde G}[ij]-A_L[ij]\right| \left|A_{\tilde G}[k\ell]-A_L[k\ell]\right| \\&
=
\sum_{i<j}\left|A_{\tilde G}[ij]-A_L[ij]\right| + 2 \sum_{(i,j)\prec (k,\ell)} \left|A_{\tilde G}[ij]-A_L[ij]\right| \left|A_{\tilde G}[k\ell]-A_L[k\ell]\right|,
\end{align*}
where $\prec$ denotes any fixed ordering of the node pairs. Hence,
\[
\Psi(L)
=
\sum_{i<j}
\mathbb{E}\left[
\left|A_{\tilde G}[ij]-A_L[ij]\right|
\right] +  2 \sum_{(i,j)
\prec(k,\ell)}  \mathbb{E}\left[ \left|A_{\tilde G}[ij]-A_L[ij]\right|\right] \mathbb{E}\left[ \left|A_{\tilde G}[k\ell]-A_L[k\ell]\right|\right].
\]
Although \(\Psi(L)\) is not additive over the edges, the minimizing value
of each \(A_L[ij]\) can be determined by comparing its two possible values
while holding all other entries of \(A_L\) fixed.\\
Let
\[
p_{ij}
=
\Pr(A_{\tilde G}[ij]=1
)
=
\left[
1+
\left(\frac{\alpha}{1-\alpha}\right)^{2A_W[ij]-1}
\right]^{-1},
\]
and define $Z_{k\ell}=\mathbbm 1_{\{A_{\tilde G}[k\ell]\neq A_L[k\ell]\}}$. Denote $q_{k\ell} = \mathbb{E}\left[ \left| A_{\tilde G}[k\ell]- A_L[k\ell] \right| \right]=\Pr(Z_{k\ell}=1)$.\\
For a fixed pair \(i<j\), hold all entries \(A_L[k\ell]\), with
\((k,\ell)\neq(i,j)\), fixed. Then:
\begin{itemize}
    \item[(a)]  \begin{align*}
    \mathbb{E}\left[
\left|A_{\tilde G}[ij]-A_L[ij]\right|
\right] = 
\begin{cases}
    p_{ij}, & \text{if } A_L[ij]=0, \\
    1-p_{ij}, & \text{if } A_L[ij]=1;
\end{cases}
\end{align*}
\item[(b)] 
\begin{align*}
    & 2 \mathbb{E}\left[ \left|  A_{\tilde G}[ij]-A_L[ij]\right| \right] 
    \sum_{(k,\ell) \neq (i,j)}  \mathbb{E}\left[ \left| A_{\tilde G}[k\ell]-A_L[k\ell]\right| \right]  \\ & = 
    \begin{cases}
 2 \mathbb{E}\left[ A_{\tilde G}[ij]\right] 
 \sum_{(k,\ell) \neq (i,j)} \Pr\left(Z_{k\ell}=1\right) 
 , & \text{if } A_L[ij]=0, \\
    2 \mathbb{E}\left[ 1-A_{\tilde G}[ij]\right]  \sum_{(k,\ell) \neq (i,j)} \Pr\left(Z_{k\ell}=1\right) , & \text{if } A_L[ij]=1,
   \end{cases} \\ 
   & = 
   \begin{cases}
2 p_{ij}
 \sum_{(k,\ell) \neq (i,j)} q_{k\ell} 
 , & \text{if } A_L[ij]=0, \\
    2 \left(1-p_{ij} \right) \sum_{(k,\ell) \neq (i,j)}  q_{k\ell}, & \text{if } A_L[ij]=1.
   \end{cases} 
\end{align*}
\end{itemize}
Therefore, the contribution of edge \((i,j)\) to the Fréchet function 
\begin{align*}
    & \mathbb{E}\left[ \left| A_{\tilde G}[ij]-A_L[ij]\right|\right] + 2 
    \mathbb{E}\left[ \left| A_{\tilde G}[ij]-A_L[ij]\right|\right] 
    \sum_{(k,\ell) \neq (i,j)}  \mathbb{E}\left[ \left| A_{\tilde G}[k\ell]-A_L[k\ell]\right| \right]  \\ & =
\begin{cases}
 p_{ij} + 2 p_{ij}
 \sum_{(k,\ell) \neq (i,j)} q_{k\ell}, & \text{if } A_L[ij]=0, \\
  1-p_{ij} + 2 \left( 1-p_{ij}  \right)
 \sum_{(k,\ell) \neq (i,j)} q_{k\ell}, & \text{if } A_L[ij]=1,
\end{cases} \\ & =
\begin{cases}
 p_{ij} \left( 1 + 2 \sum_{(k,\ell) \neq (i,j)} q_{k\ell}\right), & \text{if } A_L[ij]=0, \\
  \left( 1-p_{ij} \right)  \left( 1 + 2 \sum_{(k,\ell) \neq (i,j)} q_{k\ell}\right), & \text{if } A_L[ij]=1,
\end{cases}
\end{align*}
is minimized by setting \(A_L[ij]=1\) if \(p_{ij}>1/2\), by setting \(A_L[ij]=0\) if \(p_{ij}<1/2\), and by either choice if \(p_{ij}=1/2\), since the term $\left( 1 + 2 \sum_{(k,\ell) \neq (i,j)} q_{k\ell}\right)$ is strictly positive and, with all other entries of \(A_L\) fixed, this quantity does not depend on the choice of \(A_L[ij]\).\\
Since \(\alpha\in(0,1/2)\), we have
\[
p_{ij}>\frac12
\quad\Longleftrightarrow\quad
\left(\frac{\alpha}{1-\alpha}\right)^{2A_W[ij]-1}<1
\quad\Longleftrightarrow\quad
A_W[ij]>\frac12.
\]
Similarly, \(p_{ij}<1/2\) if and only if \(A_W[ij]<1/2\), while \(p_{ij}=1/2\) if and only if \(A_W[ij]=1/2\). It follows that
\[
\mathscr{F}(\tilde{G})
=
\left\{
L\in\mathcal{G}_{\mathcal V}
:
A_L[ij]=\mathbbm{1}_{\{A_W[ij]>1/2\}}
\text{ for all } i<j \text{ such that } A_W[ij]\neq 1/2
\right\}.
\]

It remains to show that this set coincides with the set of modes. Since the edges of \(\tilde G\) are independent under the \(\GCER(W,\alpha)\) distribution, the probability mass function is maximized by choosing, for each pair \(i<j\), the most likely value of \(A_{\tilde G}[ij]\). This value is \(1\) if \(p_{ij}>1/2\), \(0\) if \(p_{ij}<1/2\), and either \(0\) or \(1\) if \(p_{ij}=1/2\). By the equivalences above, the resulting set of graphs is exactly \(\mathscr{F}(\tilde G)\). Hence, the Fréchet mean set coincides with the set of modes.
\end{proof}

\begin{proof}[Proof of \Autoref{prop:entropy}]
We investigate how the entropy of the $\GCER(W,\alpha)$ distribution, defined as 
\begin{equation}\label{eq:gcer_entropy}
    H(\tilde G)=
    - \sum_{G \in \mathcal{G}_{\mathcal{V}}}
   p_{\GCER}(G;W,\alpha)
    \log \left(p_{\GCER}(G;W,\alpha)\right),
\end{equation}
depends on $\alpha$. We use the standard fact that, if $X$ and $Y$ are independent
random variables, then the entropy of their joint distribution is the sum of the
marginal entropies, namely $H(X,Y)=H(X)+H(Y)$ \citep[see, e.g.,][]{Cov99}. 
A random graph $\tilde G\sim\GCER(W,\alpha)$ can be represented as a random vector of dimension \(M\), whose entries are independent and distributed according to \eqref{eq:CER2}. 
Thus the entropy in \eqref{eq:gcer_entropy} can be written as
\begin{align}\label{eq:gcer_ij_entropy}
H(\tilde G) & = \sum_{i<j} H(A_{\tilde G}[ij]),
\end{align}
where
\begin{align*}
H(A_{\tilde G}[ij])  & = - \sum_{g \in \{ 0,1\}} \Pr(A_{\tilde G}[ij]=g
) \log\left( 
\Pr(A_{\tilde G}[ij]=g)
\right) \notag \\
& = -\frac{\left(\frac{\alpha}{1-\alpha}\right)^{2{A}_W[ij]-1}}{1+\left(\frac{\alpha}{1-\alpha}\right)^{2{A}_W[ij]-1}} \left(2{A}_W[ij]-1\right) \log\left( \frac{\alpha}{1-\alpha}\right) + \log\left( 1+\left(\frac{\alpha}{1-\alpha}\right)^{2{A}_W[ij]-1}\right), \notag
\end{align*}
is the entropy of the random edge for pair $i
< j\in\mathcal{V}$.\\
It is immediate to verify that the entropy converges to its lower bound as \(\alpha\to 0\), namely, 
\begin{align*}
\begin{cases}
     H(A_{\tilde G}[ij]) \to 0, & \text{whenever } 
     A_W[ij]\neq 1/2, \\
     H(A_{\tilde G}[ij]) \to \log(2), & \text{whenever } 
      A_W[ij]= 1/2.
\end{cases}
\end{align*}
Hence, by \eqref{eq:gcer_ij_entropy}, 
\[
H(\tilde G) \to M^*\log(2),
\]
as $\alpha\to 0$, which proves part i).\\
Conversely,  $H(A_{\tilde G}[ij])\to \log(2)$ as \(\alpha\to 1/2\). Hence, by \eqref{eq:gcer_ij_entropy},
\[
H(\tilde{G}) \to M\log(2),
\]
as $\alpha\to 1/2$, which proves part ii).\\
Finally, differentiating \eqref{eq:gcer_ij_entropy} with respect to \(\alpha\) yields
\begin{align*}
\frac{\partial}{\partial \alpha} H(\tilde G)
&=
-
\frac{\log\left(\frac{\alpha}{1-\alpha}\right)}{(1-\alpha)^2}
\sum_{i<j}
\frac{\left(2A_W[ij]-1\right)^2}
{\left\{1+\left(\frac{\alpha}{1-\alpha}\right)^{2A_W[ij]-1}\right\}^2}
\left(\frac{\alpha}{1-\alpha}\right)^{2(A_W[ij]-1)},
\end{align*}
which is non-negative 
for all \(\alpha\in(0,1/2)\),
with strict inequality unless \(A_W[ij]=1/2\) for all \(i<j\).
\end{proof}

\section{Details on Bayesian modeling and computation}\label{sec:sm_sec3}
We provide additional details on the scaling function $\tau_{n_0}$, as well as on the derivation of the full conditional distributions in Propositions~\ref{prop:fc_Gm} and \ref{prop:fc_alpha} and of the marginal likelihood in \eqref{eq:marginal-likelihood}, all presented in \Autoref{sec:model}. 

\subsection{Scaling function $\tau_{n_0}$}\label{app:app_scaling}
\Autoref{fig:scaling}
illustrates the scaling function $\tau_{n_0}$ defined in \Autoref{sec:model}, for various values of the dispersion parameter $\alpha$.

\begin{figure}[h!]
\centering
\includegraphics[width=0.5\linewidth]{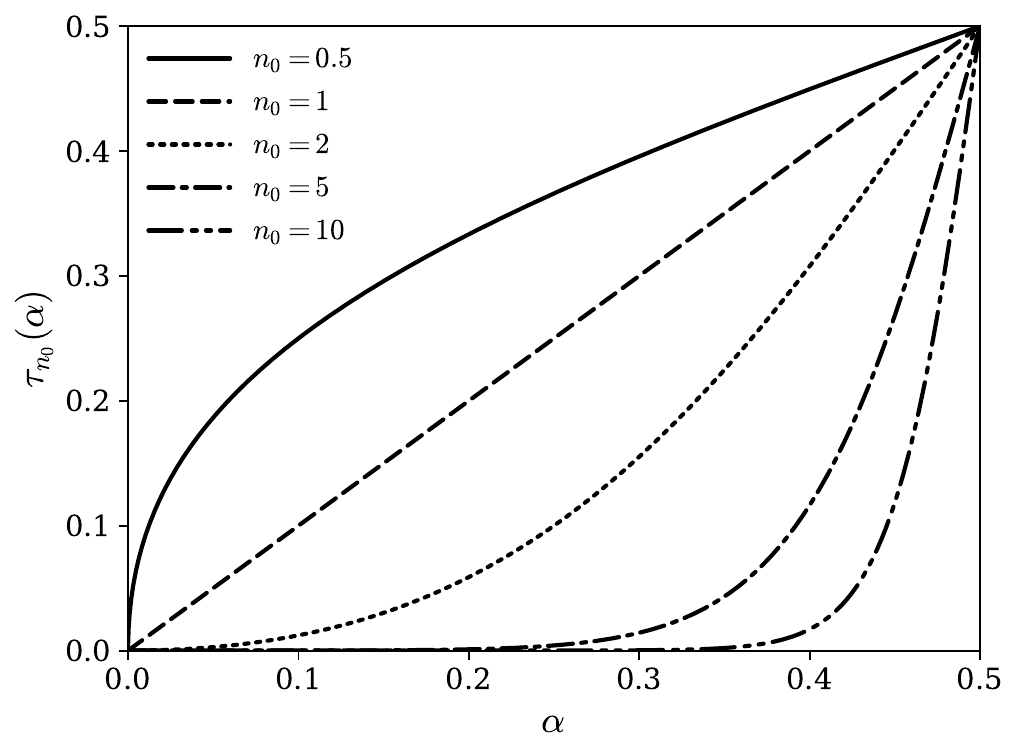}
    \caption{Scaling function $\tau_{n_0}(\alpha)$, for $\alpha\in(0,1/2)$ and $n_0\in\{0.5,1,2,5,10\}$.}\label{fig:scaling}
\end{figure}

\subsection{Full conditional distribution of the location network}

Let $s_{ij}=\sum_{\ell=1}^nA_{G_\ell}[ij]$ and
$A_{\bar G}[ij]=s_{ij}/n$ and put
$q=\alpha/(1-\alpha)$. Up to terms not depending on $A_C[ij]$, the
edgewise likelihood and prior are proportional to
\begin{equation*}
q^{nA_C[ij](1-2A_{\bar G}[ij])}
\quad\text{and}\quad
q^{n_0A_C[ij](1-2A_W[ij])},
\end{equation*}
respectively. Their product is
\begin{equation*}
q^{(n_0+n)A_C[ij](1-2A_{W'}[ij])},
\qquad
A_{W'}=\frac{n_0A_W+nA_{\bar G}}{n_0+n}.
\end{equation*}
Normalizing edge by edge proves the GCER full conditional for $C$ stated in \Autoref{prop:fc_Gm}. 
We also note that, when $A_W[ij]=1/2$ for every edge,
$(n_0+n)\{1-2A_{W'}[ij]\}=n-2s_{ij}$, so the induced full
conditional distribution is independent of $n_0$.

\subsection{Full conditional distribution of the dispersion parameter}

Let $D(C)=\sum_{\ell=1}^nd_{\mathrm H}(G_\ell,C)$ and
\begin{equation*}
S(C)=D(C)+n_0\sum_{i<j}A_C[ij]\{1-2A_W[ij]\}.
\end{equation*}
For a general prior location $W$, the full conditional density is
\begin{align}
p(\alpha\mid G^{(1:n)},C)
\propto{}&
\alpha^{a+S(C)-1}(1-\alpha)^{b+nM-S(C)-1}\notag\\
&\times\prod_{i<j}
\left[1+
\left\{\frac{\alpha}{1-\alpha}\right\}^{n_0(1-2A_W[ij])}
\right]^{-1},
\qquad 0<\alpha<\frac12.
\label{eq:general-alpha-full-conditional}
\end{align}
When $A_W[ij]=1/2$ for every edge, the product in
\eqref{eq:general-alpha-full-conditional} is constant and $S(C)=D(C)$.
The density therefore reduces to the truncated Beta full conditional in \Autoref{prop:fc_alpha}.

\subsection{Marginal likelihood}\label{app:app_ml}

Set $r_{ij}=n_0\{1-2A_W[ij]\}$ and
$q=\alpha/(1-\alpha)$. The GCER prior implies
\begin{equation*}
\Pr\{A_C[ij]=1\mid\alpha\}=\frac{q^{r_{ij}}}{1+q^{r_{ij}}},
\qquad
\Pr\{A_C[ij]=0\mid\alpha\}=\frac{1}{1+q^{r_{ij}}}.
\end{equation*}
Conditional on $A_C[ij]=0$, the edgewise likelihood is
$\alpha^{s_{ij}}(1-\alpha)^{n-s_{ij}}$, whereas, conditional on
$A_C[ij]=1$, it is
$\alpha^{n-s_{ij}}(1-\alpha)^{s_{ij}}$. Summing over the two possible
values of $A_C[ij]$ and using independence across edges therefore gives the marginal likelihood in \eqref{eq:marginal-likelihood}:
\begin{align}
L(\alpha;G^{(1:n)})
={}&\prod_{i<j}
\frac{
\alpha^{s_{ij}}(1-\alpha)^{n-s_{ij}}
+q^{r_{ij}}\alpha^{n-s_{ij}}(1-\alpha)^{s_{ij}}
}{1+q^{r_{ij}}}\notag\\
={}&\prod_{i<j}
\frac{
\alpha^{s_{ij}}(1-\alpha)^{n-s_{ij}+r_{ij}}
+\alpha^{n-s_{ij}+r_{ij}}(1-\alpha)^{s_{ij}}
}{
\alpha^{r_{ij}}+(1-\alpha)^{r_{ij}}
}.
\label{eq:supp-marginal-likelihood}
\end{align}
When $W=W_0$, $r_{ij}=0$ for every edge, and
\eqref{eq:supp-marginal-likelihood} reduces to
\begin{equation}\label{eq:mar_lik_conj}
L(\alpha;G^{(1:n)})
={} 2^{-M}\prod_{i<j}\left\{
\alpha^{s_{ij}}(1-\alpha)^{n-s_{ij}}
+\alpha^{n-s_{ij}}(1-\alpha)^{s_{ij}}
\right\}.
\end{equation}

\subsection{Uniqueness of the maximizer of the marginal likelihood}\label{sec:unique}
We next characterize the existence and uniqueness of an interior maximizer
of the marginal likelihood under the semi-conjugate specification.
\begin{proposition}\label{prop:unique_max}
Assume \(n\geq2\). For each \(i<j\), let
\[
s_{ij}=\sum_{\ell=1}^n A_{G_\ell}[ij],
\qquad
k_{ij}=\min\{s_{ij},n-s_{ij}\},
\qquad
d_{ij}=n-2k_{ij}=|n-2s_{ij}|,
\]
and define
\[
K=\sum_{i<j}k_{ij},
\qquad
D=\sum_{i<j}\binom{d_{ij}}{2}.
\]
Under the semi-conjugate specification, the marginal likelihood
\(L(\alpha;G^{(1:n)})\) has a unique maximizer in \((0,1/2)\) if and only if
\[
0<K<D.
\]
\end{proposition}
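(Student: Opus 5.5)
Throughout, write $q=\alpha/(1-\alpha)$. This is a strictly increasing bijection from $\alpha\in(0,1/2)$ onto $q\in(0,1)$, so it suffices to study the marginal likelihood \eqref{eq:mar_lik_conj} as a function of $q$. For each pair $i<j$ the factor in \eqref{eq:mar_lik_conj} factorizes as
\[
\alpha^{s_{ij}}(1-\alpha)^{n-s_{ij}}+\alpha^{n-s_{ij}}(1-\alpha)^{s_{ij}}
=\alpha^{k_{ij}}(1-\alpha)^{n-k_{ij}}\bigl(1+q^{d_{ij}}\bigr).
\]
Using $\alpha=q/(1+q)$ and $1-\alpha=1/(1+q)$, I get, up to an additive constant,
\[
\ell(q)=\log L = K\log q-nM\log(1+q)+\sum_{i<j}\log\bigl(1+q^{d_{ij}}\bigr).
\]
I will work with the score function $f(q)=q\,\ell'(q)$, which has the same sign as $\partial L/\partial\alpha$:
\[
f(q)=K-\frac{nMq}{1+q}+\sum_{i<j}\frac{d_{ij}q^{d_{ij}}}{1+q^{d_{ij}}}.
\]
Since $nM=\sum_{i<j}(d_{ij}+2k_{ij})$, one checks that $f(1)=0$, so $\alpha=1/2$ is always a boundary stationary point. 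This suggests factoring out $(1-q)/(1+q)$. Regrouping edge by edge, I write $f(q)=\frac{1-q}{1+q}\,F(q)$ with
\[
F(q)=K-\sum_{i<j} d_{ij}\,\rho_{d_{ij}}(q),\qquad
\rho_d(q)=\frac{1+q}{1-q}\Bigl(\frac{q}{1+q}-\frac{q^d}{1+q^d}\Bigr).
\]
The prefactor $(1-q)/(1+q)$ is positive on $(0,1)$, so the sign of $\partial L/\partial\alpha$ is the sign of $F$.

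The core of the argument is the following lemma about $\rho_d$. For each integer $d\ge1$:
\begin{itemize}
\item[(i)] $\rho_1\equiv0$;
\item[(ii)] for $d\ge2$, $\rho_d$ is continuous and strictly increasing on $(0,1)$;
\item[(iii)] $\rho_d(0^+)=0$ and $\rho_d(1^-)=(d-1)/2$.
\end{itemize}
The terms with $d_{ij}=0$ carry the weight $d_{ij}=0$, so they drop out of $F$. The limit at $1$ follows from L'H\^opital's rule; it is consistent with the direct computation $f'(1)=\bigl(\sum_{i<j} d_{ij}^2-nM\bigr)/4=(D-K)/2$. For monotonicity, I substitute $q=e^{-2s}$ with $s\in(0,\infty)$, which turns $\rho_d$ into
\[
\rho_d=\tfrac12\Bigl(\frac{\tanh(ds)}{\tanh(s)}-1\Bigr).
\]
It then suffices to show that $s\mapsto\tanh(ds)/\tanh(s)$ is strictly decreasing for $d\ge2$. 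Equivalently, in the original variable, I need $(1+q)(1+q+\cdots+q^{d-1})/(1+q^d)$ to be strictly increasing in $q\in(0,1)$. I would first try to prove this by showing that the log-derivative $d\,\mathrm{csch}(2ds)-\mathrm{csch}(2s)$ is negative, i.e.\ that $x\mapsto x/\sinh(x)$ is strictly decreasing on $(0,\infty)$. That is a standard elementary fact, since $\tanh x<x$ for $x>0$. I expect this lemma to be the main technical obstacle, mainly in getting the algebra into a clean form.

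Given the lemma, $F$ is continuous and nonincreasing on $(0,1)$, and strictly decreasing as soon as some $d_{ij}\ge2$. Its endpoint limits are $F(0^+)=K$ and $F(1^-)=K-\sum_{i<j} d_{ij}(d_{ij}-1)/2=K-D$.
\begin{itemize}
\item If $0<K<D$, then $D>0$ forces some $d_{ij}\ge2$, so $F$ is strictly decreasing from a positive value to a negative value. Hence $F$ has exactly one zero $q^*\in(0,1)$, with $L$ increasing before it and decreasing after it, and $\alpha^*=q^*/(1+q^*)$ is the unique maximizer.
\item If $K=0$, then $F\le0$ on $(0,1)$, so $L$ is nonincreasing and has no interior strict maximizer. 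It is constant only in the degenerate case where all $d_{ij}\le1$, and then every point is a maximizer, so uniqueness still fails.
\item If $K\ge D$, then $F\ge F(1^-)\ge0$, so $L$ is nondecreasing on $(0,1/2)$ and its supremum is approached only at $1/2$. In the constant case ($K=D$ with no $d_{ij}\ge2$, which forces $K=D=0$) uniqueness again fails.
\end{itemize}
Together these cases give the equivalence. The hypothesis $n\ge2$ only ensures that the setting is nontrivial (for $n=1$ one has $K=D=0$); I would check this edge case briefly at the end.
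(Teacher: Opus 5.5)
Your argument is correct, and it follows a genuinely different route from the paper's.

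The paper reparametrizes by $\rho=(1-2\alpha)^2$ and writes $L\propto\prod_{i<j}(1-\rho)^{k_{ij}}P_{d_{ij}}(\rho)$ with $P_d(\rho)=\sum_{h}\binom{d}{2h}\rho^h$. It then shows that the log marginal likelihood $\Lambda$ is \emph{strictly concave} on $(0,1)$ when $K>0$. This rests on the fact that, for $d\ge2$, the roots of $P_d$ are $-\tan^2\bigl((2r-1)\pi/(2d)\bigr)<0$, so each $\log P_d$ is concave. Uniqueness then follows from the endpoint slopes $\Lambda'(0^+)=D-K$ (the $\alpha\to1/2$ end) and $\Lambda'(1^-)=-\infty$ (the $\alpha\to0$ end).

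You instead use the odds $q=\alpha/(1-\alpha)$, divide out the boundary stationary point at $q=1$, and prove that the remaining factor $F$ of the score is monotone. This gives a single sign change of the score (unimodality) rather than concavity.

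The two arguments are closely related. Since $1-2\alpha=\tanh s$, the paper's $\rho$ equals $\tanh^2 s$, and one checks that $F(q)=-(1-\rho)\Lambda'(\rho)$, with your $d\,\rho_d(q)=(1-\rho)P_d'(\rho)/P_d(\rho)$. Your key lemma therefore says that $(1-\rho)P_d'/P_d$ is decreasing in $\rho$. That is implied by the paper's log-concavity of $P_d$ (as a product of two positive decreasing functions), but it is weaker.

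Your route buys two things:
\begin{itemize}
\item a fully elementary key step, needing only $\tanh x<x$, in place of the root-location claim that the paper leaves as ``it can be verified'';
\item finite endpoint values $F(0^+)=K$ and $F(1^-)=K-D$, which make the three-case analysis symmetric and transparent.
\end{itemize}
The paper's route buys the stronger structural fact that the log marginal likelihood is strictly concave in $\rho$. That is more than uniqueness requires, but convenient for the numerical EB maximization.

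Your treatment of the degenerate constant cases is harmless, and under $n\ge2$ they are vacuous. If $K=0$, then $d_{ij}=n\ge2$ for every pair, so $F$ decreases strictly from $F(0^+)=0$ and is negative throughout. This is exactly where the paper also uses $n\ge2$.
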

\begin{remark}
The condition \(K>0\) requires at least one edge to change its state across
the observed networks. Moreover, since \(2k_{ij}=n-d_{ij}\),
\[
D-K
=
\frac{1}{2}
\left(
\sum_{i<j}d_{ij}^2-nM
\right).
\]
Thus,
\[
D>K
\quad\Longleftrightarrow\quad
\frac{1}{M}\sum_{i<j}d_{ij}^2>n,
\]
meaning that the squared edge-frequency imbalances must, on average,
exceed \(n\).
\end{remark}
\begin{proof}[Proof of Proposition~\ref{prop:unique_max}]
Under the semi-conjugate specification, that is, \(W=W_0\), the marginal
likelihood in \eqref{eq:marginal-likelihood} reduces to
\[
L(\alpha;G^{(1:n)})
=
2^{-M}
\prod_{i<j}
\left\{
\alpha^{s_{ij}}(1-\alpha)^{n-s_{ij}}
+
\alpha^{n-s_{ij}}(1-\alpha)^{s_{ij}}
\right\}.
\]
Consider the reparametrization
\[
\rho=(1-2\alpha)^2,
\qquad
\alpha=\frac{1-\sqrt{\rho}}{2},
\qquad
\rho\in(0,1).
\]
The map \(\alpha\mapsto\rho\) is a strictly decreasing bijection from
\((0,1/2)\) to \((0,1)\), so maximizing the marginal likelihood with
respect to \(\alpha\) is equivalent to maximizing it with respect to
\(\rho\). For each pair \(i<j\),
\begin{equation}\label{eq:id1}
\begin{aligned}
\frac{
\alpha^{s_{ij}}(1-\alpha)^{n-s_{ij}}
+
\alpha^{n-s_{ij}}(1-\alpha)^{s_{ij}}
}{2}
=
\frac{1}{2}
\bigl[\alpha(1-\alpha)\bigr]^{k_{ij}}
\left\{
\alpha^{d_{ij}}+(1-\alpha)^{d_{ij}}
\right\}.
\end{aligned}
\end{equation}
We define
\[
P_d(\rho)
=
\sum_{h=0}^{\lfloor d/2\rfloor}
\binom{d}{2h}\rho^h
\]
and observe that \(4\alpha(1-\alpha)=1-\rho\) and, for any nonnegative
integer \(d\),
\[
\alpha^d+(1-\alpha)^d
=
2^{1-d}
\sum_{h=0}^{\lfloor d/2\rfloor}
\binom{d}{2h}\rho^h.
\]
Thus, using \(n=2k_{ij}+d_{ij}\), we rewrite \eqref{eq:id1} as
\[
\frac{
\alpha^{s_{ij}}(1-\alpha)^{n-s_{ij}}
+
\alpha^{n-s_{ij}}(1-\alpha)^{s_{ij}}
}{2}
=
2^{-n}(1-\rho)^{k_{ij}}P_{d_{ij}}(\rho).
\]
Therefore,
\[
L\left(\frac{1-\sqrt{\rho}}{2};G^{(1:n)}\right)
=
2^{-nM}
\prod_{i<j}
(1-\rho)^{k_{ij}}P_{d_{ij}}(\rho).
\]
Dropping the additive constant \(-nM\log(2)\), denote the corresponding
log marginal likelihood by
\begin{equation}\label{eq:log-likelihood-rho}
\Lambda(\rho)
=
K\log(1-\rho)
+
\sum_{i<j}\log P_{d_{ij}}(\rho).
\end{equation}
It can be verified that, for \(d\geq2\), the \(\lfloor d/2\rfloor\)
roots of \(P_d\) are
\[
a_{dr}
=
-\tan^2\left(\frac{(2r-1)\pi}{2d}\right)<0,
\qquad
r=1,\ldots,\left\lfloor\frac d2\right\rfloor,
\]
and are therefore real and strictly negative. Since \(P_d(0)=1\),
\[
P_d(\rho)
=
\prod_{r=1}^{\lfloor d/2\rfloor}
\left(1-\frac{\rho}{a_{dr}}\right),
\qquad
\frac{\mathrm d^2}{\mathrm d\rho^2}\log P_d(\rho)
=
-\sum_{r=1}^{\lfloor d/2\rfloor}
\frac{1}{(\rho-a_{dr})^2}.
\]
For \(d\in\{0,1\}\), \(P_d(\rho)=1\); hence, the factorization and
second-derivative formula above remain valid under the usual conventions
for empty products and sums. Differentiating
\eqref{eq:log-likelihood-rho} and using the factorization above gives
\begin{align*}
\Lambda'(\rho)
&=
-\frac{K}{1-\rho}
+
\sum_{i<j}
\frac{P_{d_{ij}}'(\rho)}{P_{d_{ij}}(\rho)},\\
\Lambda''(\rho)
&=
-\frac{K}{(1-\rho)^2}
-
\sum_{i<j}
\sum_{r=1}^{\lfloor d_{ij}/2\rfloor}
\frac{1}{(\rho-a_{d_{ij}r})^2}.
\end{align*}
Thus, if \(K>0\), then \(\Lambda''(\rho)<0\) for every
\(\rho\in(0,1)\), so \(\Lambda\) is strictly concave on \((0,1)\).
Moreover, since \(P_d(0)=1\) and \(P_d'(0)=\binom{d}{2}\),
\(\lim_{\rho\to0^+}\Lambda'(\rho)=D-K\). If \(K>0\), then
\(\lim_{\rho\to1^-}\Lambda'(\rho)=-\infty\), because the term
\(-K/(1-\rho)\) diverges to \(-\infty\), while the remaining terms stay
finite. If \(0<K<D\), the strictly decreasing function \(\Lambda'\)
therefore has a unique zero \(\rho^\star\in(0,1)\), which is the unique
maximizer of \(\Lambda\) on \((0,1)\).

Conversely, if \(K>0\) and \(D\leq K\), then
\(\lim_{\rho\to0^+}\Lambda'(\rho)=D-K\leq0\). Since \(\Lambda'\) is
strictly decreasing, \(\Lambda'(\rho)<0\) for every \(\rho\in(0,1)\).
If \(K=0\), then \(k_{ij}=0\) and hence \(d_{ij}=n\) for every \(i<j\).
Since \(n\geq2\), \(P_n'(\rho)>0\) on \((0,1)\), so \(\Lambda\) is
strictly increasing. Thus, in either case, no interior maximizer exists.

Therefore, an interior maximizer exists if and only if \(0<K<D\), and is
then unique. Since \(\alpha\mapsto\rho=(1-2\alpha)^2\) is a bijection from
\((0,1/2)\) to \((0,1)\), the corresponding unique maximizer is
\[
\hat\alpha=\frac{1-\sqrt{\rho^\star}}{2}.
\]
\end{proof}

\section{Additional numerical experiments}\label{sec:sm_sec4}
We extend the simulation study in \Autoref{sec:application} of the main article by considering additional values of $\alpha \in \{0.1,0.3\}$. For all scenarios, we assess the accuracy of the proposed estimators using the normalized Hamming error for $C$ and the absolute error for $\alpha$. Thus, in the extended numerical experiment, we consider settings spanning
$n\in\{10,100,1000\}$, $N\in\{10,100,1000,2000\}$, and
$\alpha\in\{0.1,0.2,0.3\}$. For each pair $(N,\alpha)$, the location parameter
$C$ was generated from a two-community stochastic block model with
within-community edge probability $0.8$ and between-community edge probability
$0.1$, with nodes assigned to the two communities independently and with equal probability. Conditional on $C$, the observations were independently generated from
$\CER(C,\alpha)$. For each of the resulting $36$ settings, we generated $100$ independent datasets. As in the main article, we used $W=W_0$, under which the induced prior and posterior distributions are invariant to the choice of $n_0$, and set $a=b=1$. The Gibbs sampler was run for a total of $11{,}000$ iterations, with the first $1{,}000$ discarded as burn-in and every 10th subsequent draw retained, yielding $1{,}000$ posterior draws. 
From these draws, we computed \(\hat\alpha\) as the posterior mean and
obtained \(\hat C\) by thresholding the posterior edge-inclusion
probabilities at \(1/2\).
The normalized Hamming error for $C$ and the absolute error for $\alpha$, reported in Figures~\ref{fig:disc-C} and~\ref{fig:disc-alpha}, respectively, show that estimates from both the EB and Gibbs methods increasingly concentrate around the data-generating values as $n$ grows. The largest variability is observed for $N=10$ and $n=10$, and decreases rapidly with increasing $n$.
\begin{figure}[h!]
\centering
\includegraphics[width=\linewidth]{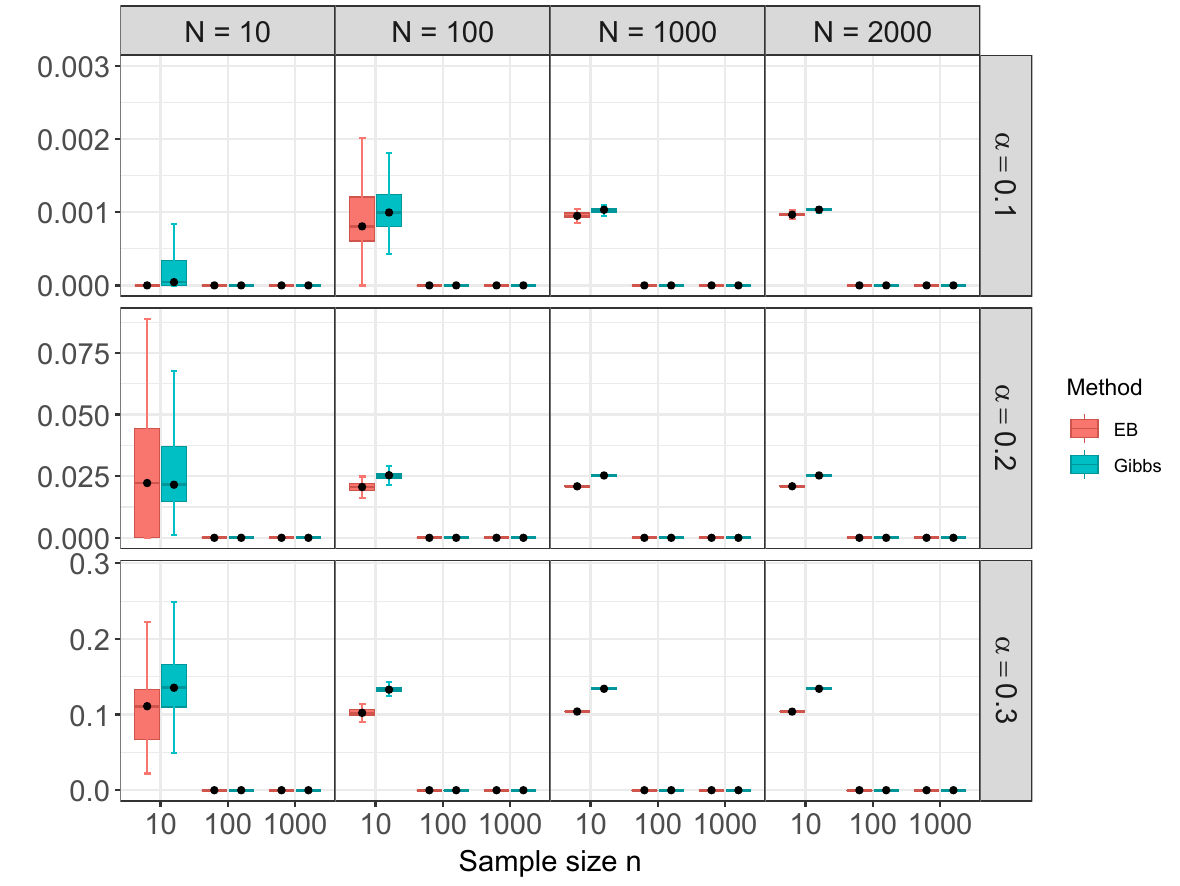}
\caption{Boxplots of the normalized Hamming error
$d_{\mathrm H}(C,\hat C)/M$, based on 100 replications for each simulation
setting; black dots mark the median errors.}
\label{fig:disc-C}
\end{figure}
\begin{figure}[h!]
\centering
\includegraphics[width=\linewidth]{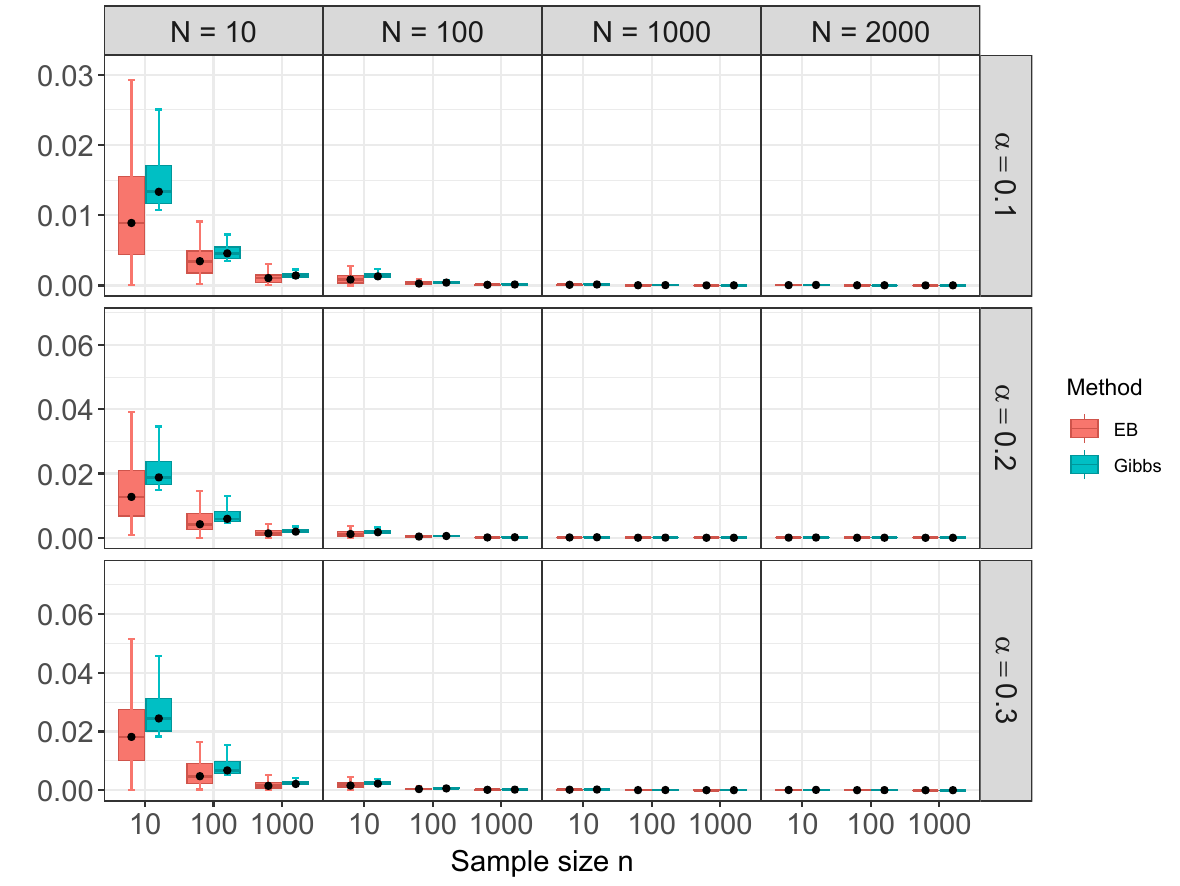}
\caption{Boxplots of the absolute error $|\alpha-\hat\alpha|$, based on 100
replications for each simulation setting; black dots mark the median errors.}
\label{fig:disc-alpha}
\end{figure}
For the additional values of $\alpha$ considered here, \Autoref{fig:ecc-full} displays the estimates of the approximate expected global clustering coefficient introduced in \Autoref{prop:gcc}, while \Autoref{fig:edge-density-full} displays the estimates of the expected edge density in \eqref{eq:edge-density}. The results lead to conclusions analogous to those in \Autoref{sec:application}, 
irrespective of the value of $\alpha$.
\begin{figure}[h]
\centering
\includegraphics[width=\linewidth]{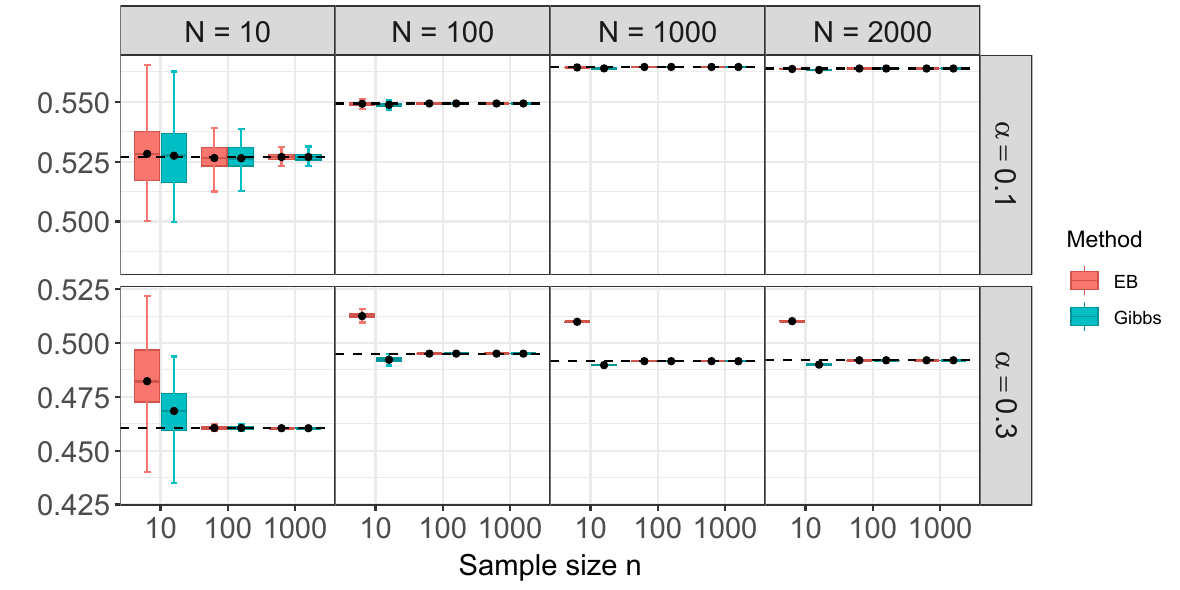}
\caption{Approximate expected global clustering coefficient for
$\alpha\in\{0.1,0.3\}$. Boxplots summarize 100 replications for each
simulation setting; dashed black lines indicate the data-generating values,
and black dots mark the median estimates.}
\label{fig:ecc-full}
\end{figure}
\begin{figure}[h]
\centering
\includegraphics[width=\linewidth]{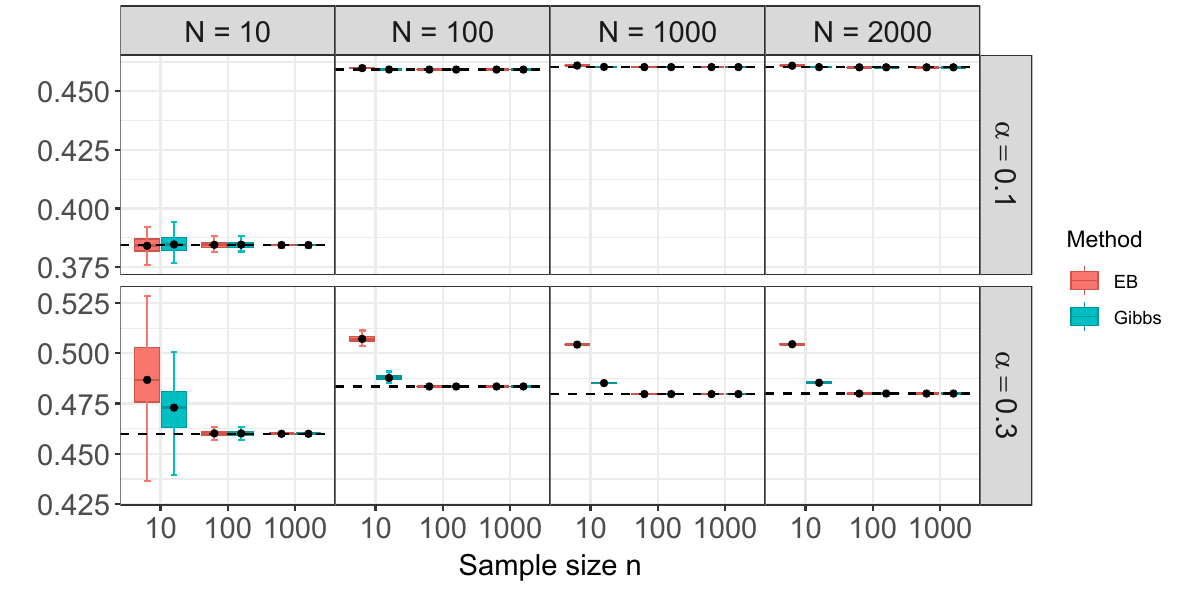}
\caption{Expected edge density for $\alpha\in\{0.1,0.3\}$. Boxplots summarize
100 replications for each simulation setting; dashed black lines indicate
the data-generating values, and black dots mark the median estimates.}
\label{fig:edge-density-full}
\end{figure}
Finally, 
\Autoref{tab:computational_time} reports the computational time, including the computation of the network summaries,
for all settings with $n=1000$. Since both methods depend on the observed graphs only through the edge-by-edge average of the adjacency matrices,
$A_{\overline{G}}=n^{-1}\sum_{\ell=1}^n A_{G_\ell}$,
their computational cost is independent of $n$ once $A_{\overline{G}}$ has been computed.
\begin{table}[h]
\centering
\begin{tabular}[t]{rrrll}
\toprule
n & N & alpha & EB (IQR) & Gibbs (IQR)\\
\midrule
1000 & 10 & 0.1 & 0.00 (0.00) & 0.71 (0.03)\\
1000 & 100 & 0.1 & 0.01 (0.00) & 8.44 (0.55)\\
1000 & 1000 & 0.1 & 0.55 (0.19) & 1020.42 (32.88)\\
1000 & 2000 & 0.1 & 2.09 (0.39) & 3404.55 (71.47)\\
\addlinespace
1000 & 10 & 0.2 & 0.00 (0.00) & 0.71 (0.04)\\
1000 & 100 & 0.2 & 0.01 (0.00) & 8.77 (0.61)\\
1000 & 1000 & 0.2 & 0.96 (0.07) & 945.96 (27.34)\\
1000 & 2000 & 0.2 & 4.11 (3.31) & 3424.13 (58.01)\\
\addlinespace
1000 & 10 & 0.3 & 0.00 (0.00) & 0.70 (0.04)\\
1000 & 100 & 0.3 & 0.02 (0.01) & 8.33 (0.29)\\
1000 & 1000 & 0.3 & 1.19 (1.22) & 931.92 (23.49)\\
1000 & 2000 & 0.3 & 6.79 (7.01) & 3397.53 (91.83)\\
\bottomrule
\end{tabular}
\caption{Median computational time (IQR) in seconds.}\label{tab:computational_time}
\end{table}
For moderate and large $n$, the two methods yield comparable
inferential results, with computational cost remaining their main difference.

\putbib
\end{bibunit}

\end{document}